\title{An effective construction for cut-and-project rhombus tilings with global $n$-fold rotational symmetry}
\author{Victor H. Lutfalla}
\newtheorem{theorem}{Theorem}
\newtheorem{proposition}{Proposition}
\newtheorem{corollary}{Corollary}
\newtheorem{lemma}{Lemma}
\newcommand{\imag}{\mathrm{i}}
\newcommand{\openinterval}[2]{ \left] #1 ,\, #2 \right[ } 
\newcommand{\closedinterval}[2]{\left[ #1,\, #2 \right]} 
\newcommand{\closedopeninterval}[2]{\left[ #1,\, #2 \right[}
\newcommand{\grid}[2]{H(#1,#2)}
\newcommand{\real}[1]{\mathrm{Re}\left(#1\right)}
\newcommand{\multigrid}[2]{G_{#1}(#2)}
\newcommand{\dualtiling}[2]{P_{#1}(#2)}
\newcommand{\edge}[2]{\{#1,#2\}}
\newcommand{\ie}{\emph{i.e.} }
\renewcommand{\leq}{\leqslant}
\renewcommand{\geq}{\geqslant}
\newcommand{\floor}[1]{\left\lfloor #1 \right\rfloor}
\newcommand{\ceil}[1]{\left\lceil #1 \right\rceil}
\begin{document}

\maketitle

\paragraph{Keywords:}Cut-and-project tiling, Rhombus tiling, Aperiodic order, Rotational symmetry, De Bruijn multigrid, Trigonometric diophantine equations

\paragraph{Acknowledgements:}I want to thank Emmanuel Jeandel for pointing me towards the article \emph{Trigonometric diophantine equations (On vanishing sums of roots of unit)} \cite{conway1976} when I presented him with the problem of regularity of multigrids. I also want to thank my advisor Thomas Fernique and my colleagues Lionel Pournin and Thierry Monteil for their help and advice.

\begin{abstract}
We give an explicit and effective construction for rhombus cut-and-project tilings with global $n$-fold rotational symmetry for any $n$. This construction is based on the dualization of regular $n$-fold multigrids. The main point is to prove the regularity of these multigrids, for this we use a result on trigonometric diophantine equations.
\end{abstract}

\section{Introduction}
\label{sec:introduction}
In the 70s Penrose presented a quasiperiodic rhombus tiling with global 5-fold rotational symmetry and local 10-fold rotational symmetry \cite{penrose1974}. This tiling is one of the most thoroughly studied tilings, for more details on the class of Penrose rhombus tilings and on the canonical Penrose rhombus tilings see \cite{senechal1996,baake2013}. In 1981 de Bruijn proposed an algebraic definition of this tiling by the dualization of a pentagrid \cite{debruijn1981}. Later on a generalized multigrid method has been shown to be equivalent to the cut-and-project method or projection method for the construction of tilings by Gähler and Rhyner \cite{gahler1986}. Here we use this dualization of multigrid method to give an effective construction of cut-and-project rhombus tilings with global  $2n$-fold rotational symmetry for any $n$, and a similar construction with global $n$-fold rotational symmetry for odd $n$.

The main point of our work is to prove the regularity of the multigrids involved because when the multigrid is not regular its dual tiling contains tiles that are not rhombuses as depicted in Figure \ref{fig:multigrid_intersection}.
Note that, by a non-constructive cardinality argument, the existence of such tilings has been known since the multigrid method was introduced and even though some might consider it known as \emph{folk} we were not able to find any reference for explicit construction of regular grids with global $n$-fold rotational symmetry outside of the case of pentagrids \cite{debruijn1981} and tetragrids \cite{beenker1982}. 
This explicit construction is used as an intermediate step for the construction of substitution discrete planes with $n$-fold rotational symmetry in \cite{kari2020substitution}.

Let us now give the relevant definitions for rhombus tilings and for the multigrid method before stating the main results.

A \emph{rhombus tiling} is a covering of the plane without overlap by a set of rhombus tiles. Here we actually consider rhombus tilings of the complex plane $\mathbb{C}$ and we only consider the case where the set of tiles is finite up to translation and where the tiling is \emph{edge-to-edge} \ie any two tiles in the tiling either do not intersect, intersect on a single common vertex or intersect along a full common edge. A \emph{patch} is a finite simply-connected set of non-overlapping tiles and a \emph{pattern} is a patch up to translation. 

A tiling is called \emph{periodic} of period $\vec{v}\neq \vec{0}$ when it is invariant under the translation of vector $\vec{v}$ and \emph{non-periodic} when it admits no period.
A tiling is called \emph{uniformly recurrent} or \emph{uniformly repetitive} when for any pattern $p$ that appears in the tiling, there exists an ``appearance'' radius $R$ such that in the intersection of any open ball of radius $R$ with the tiling there is an occurrence of the pattern $p$.
A tiling is called \emph{quasiperiodic} when it is both non-periodic and uniformly recurrent.

A tiling is said to have \emph{global $n$-fold rotational symmetry} when there exists a point $z$, usually the origin, such that the tiling is invariant under the rotation of center $z$ and angle $\tfrac{2\pi}{n}$.
The \emph{crystallographic restriction} theorem states that the only rotational symmetries possible for periodic tilings are 2-fold, 3-fold, 4-fold and 6-fold. Hence a tiling that has $n$-fold rotational symmetry for $n\notin\{2,3,4,6\}$ is non-periodic.

\emph{Cut-and-project} tilings are tilings that can be seen as the projection of a discrete plane of some $\mathbb{R}^n$ to the plane. The tiles of a cut-and-project tiling are the projection of the facets that form the corresponding discrete plane. We will not give here a precise definition since we will not use it later, for precise definitions see \cite{baake2013} or \cite{senechal1996}. Note that cut-and-project tilings are used to model quasicrystals which means that they are not only of interest to mathematicians and theoretical computer-scientists but also to physicists and crystallographists. The only result we will use on cut-and-project tilings is that they are uniformly recurrent \cite{senechal1996}, in particular this implies that cut-and-project tilings with $n$-fold rotational symmetry for $n\notin\{2,3,4,6\}$ are quasiperiodic.

\begin{figure}[t]
    \begin{subfigure}{0.45\textwidth}
      \includegraphics[width=\textwidth]{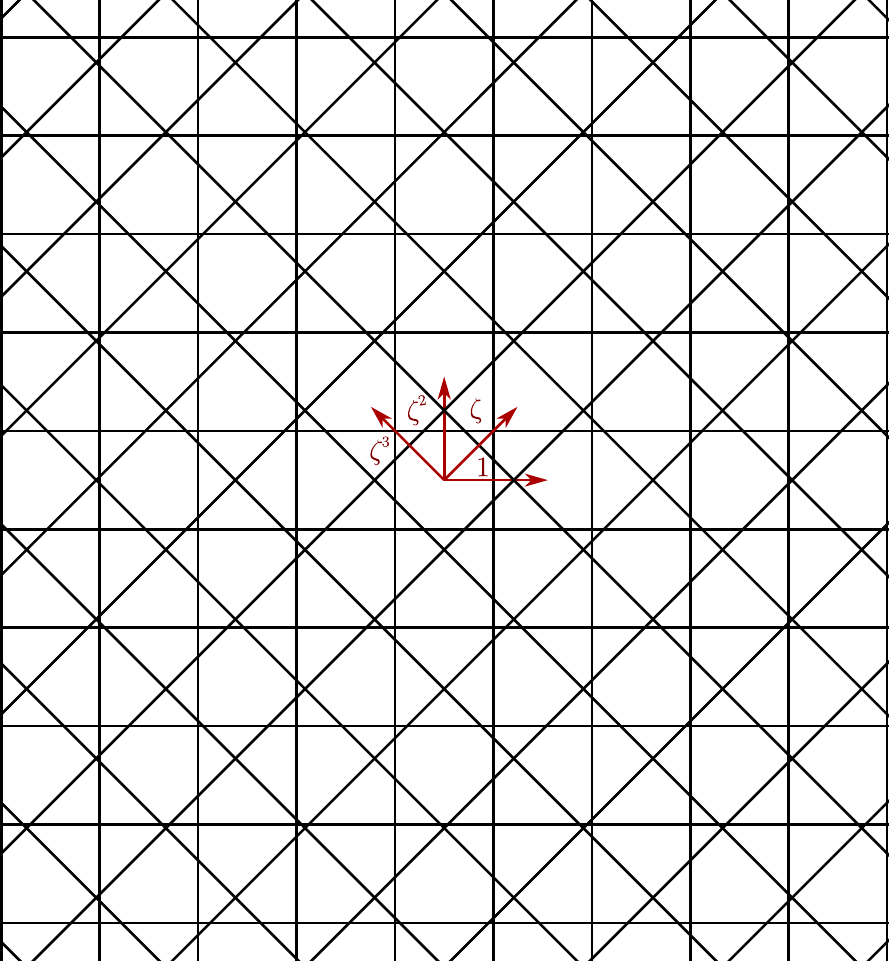}
      \caption{$\multigrid{4}{\tfrac{1}{2}}$}
      \label{fig:multigrid_4}
    \end{subfigure}
    \qquad
    \begin{subfigure}{0.45\textwidth}
      \includegraphics[width=\textwidth]{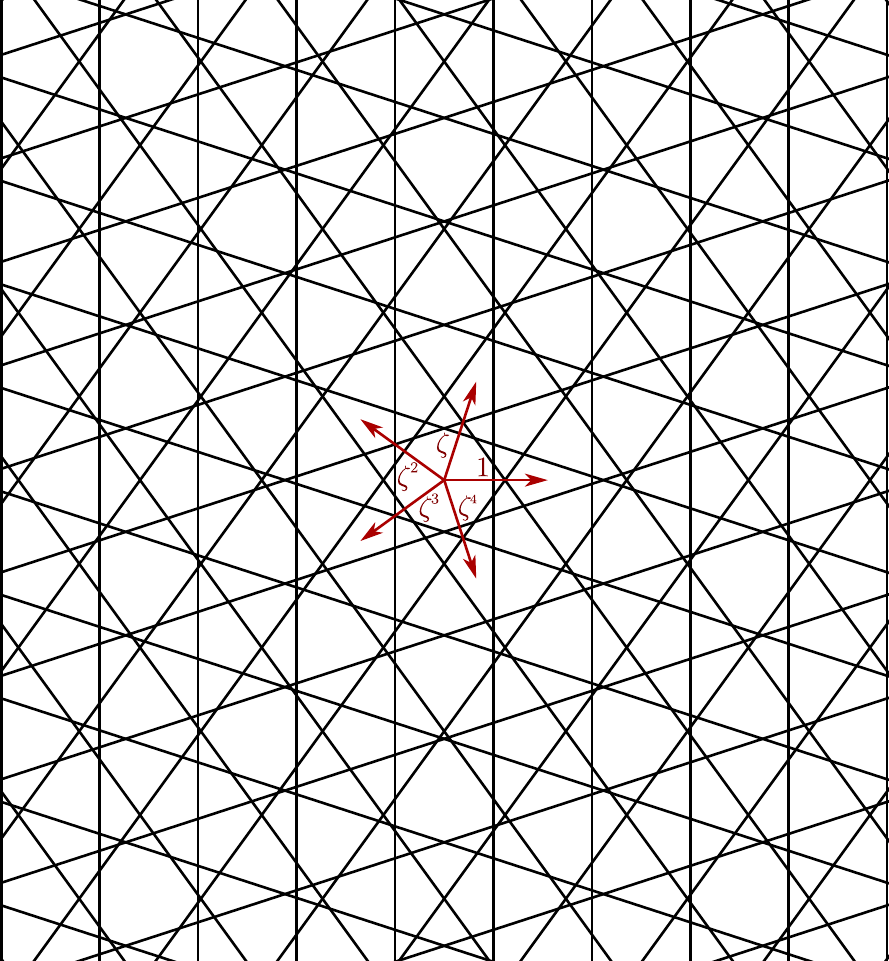}
      \caption{$\multigrid{5}{\tfrac{1}{2}}$}
      \label{fig:multigrid_5}
    \end{subfigure}
    
    \caption{Examples of multigrids}
    \label{fig:multigrid_4_5}
  \end{figure}

Let us now define grids, multigrids and their dual tilings.
The \emph{grid} of direction $\xi\in\mathbb{C}$ with $|\xi| = 1$ and offset $\gamma\in\mathbb{R}$, denoted by $\grid{\xi}{\gamma}$, is the set of equidistant lines orthogonal to $\xi$ and with offset $\gamma$ from the origin
\[ \grid{\xi}{\gamma} :=  \left\{ z \in \mathbb{C}\ |\ \real{z\cdot \bar{\xi}}-\gamma \in \mathbb{Z}\right\}.\] In this definition the important offset is actually the fractional part of $\gamma$ so we restrict the definition to the case $0\leq \gamma < 1$.
The \emph{multigrid} of pairwise non-collinear directions $\xi = (\xi_i)_{0\leq i < n}$ and offsets $\gamma = (\gamma_i)_{0\leq i < n}$, denoted by $\multigrid{\xi}{\gamma}$ is the union of the grids
\[\multigrid{\xi}{\gamma}:=\bigcup\limits_{0\leq i <n} \grid{\xi_i}{\gamma_i}.\]

For an integer $n\geq 3$ we define the $n$-fold direction $\zeta_n$ as
\[ \zeta_n = \begin{cases} e^{\imag\tfrac{2\pi}{n}} \text{ if } n \text{ is odd}\\  e^{\imag\tfrac{\pi}{n}} \text{ if } n \text{ is even} \end{cases} \]

The \emph{$n$-fold multigrid} of offsets $\gamma = (\gamma_i)_{0\leq i < n}$ denoted $\multigrid{n}{\gamma}$ is the multigrid with directions $(\zeta_n^i)_{0\leq i < n}$
\[\multigrid{n}{\gamma} := \bigcup\limits_{0\leq i <n} \grid{\zeta^i_n}{\gamma_i} \]
For simplicity for some real number $x\in\closedopeninterval{0}{1}$ we denote $\multigrid{n}{x}$ the multigrid with all offsets equal to $x$ \ie
\[ \multigrid{n}{x} := \multigrid{n}{(x,\dots x)} = \bigcup\limits_{0\leq i <n} \grid{\zeta^i_n}{x}\]

See Figure \ref{fig:multigrid_4_5} for a picture of $\multigrid{4}{\tfrac{1}{2}}$ and $\multigrid{5}{\tfrac{1}{2}}$.

From a multigrid $\multigrid{\xi}{\gamma}$ we can define a tiling $\dualtiling{\xi}{\gamma}$ by a duality process.  This tiling is dual to the multigrid in the sense that the multigrid, seen as a graph, is the adjacency graph of the tiling, see Figure \ref{fig:example_grid_tiling} for an example. We will define this tiling only in the $n$-fold case, but one can easily adapt it to the general case by replacing the directions $\zeta_n^i$ by $\xi_i$.

\begin{figure}[t]
\center
\begin{subfigure}[b]{0.45\textwidth}
\includegraphics[width=\textwidth]{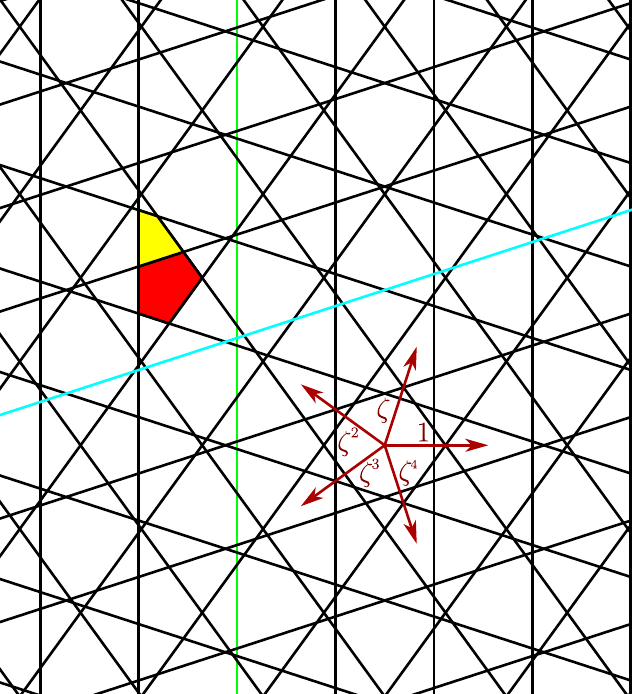}
\caption{The 5-fold multigrid or pentagrid $G_5(\tfrac{1}{2})$}
\label{fig:G5}
\end{subfigure}
\qquad
\begin{subfigure}[b]{0.45\textwidth}
\includegraphics[width=\textwidth]{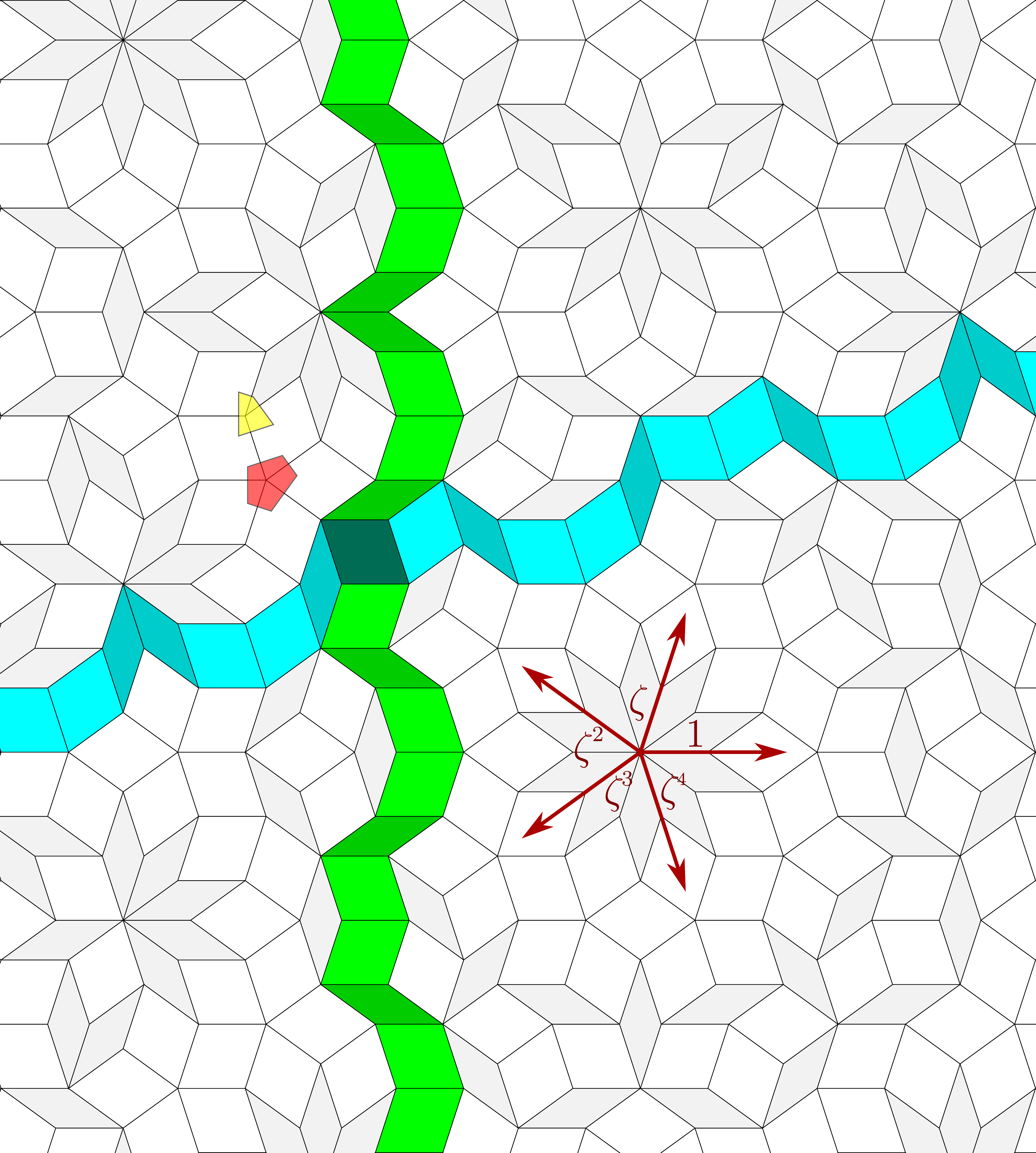}
\caption{The dual tiling $P_5(\tfrac{1}{2})$}
\label{fig:P5}
\end{subfigure}
\caption{Example of a regular grid and its dual tiling, some elements of the multigrid and their dual in the tiling have been colored.}
  \label{fig:example_grid_tiling}
\end{figure}

Given a $n$-fold multigrid $\multigrid{n}{\gamma}$ we define two functions $K: \mathbb{C} \to \mathbb{Z}^n$ and $f:\mathbb{C}\to \mathbb{C}$ as
\[ K(z):= \left( \left\lceil \text{Re}\left(z\cdot \bar{\zeta_n^i}\right)-\gamma_i \right\rceil\right)_{0\leq i <n}  \qquad f(z):= \sum\limits_{i=0}^{n-1}\left\lceil \text{Re}\left(z\cdot \bar{\zeta_n^i}\right)-\gamma_i \right\rceil \zeta_n^i.\]
Remark that the functions $K$ and $f$ are constant on the interior of each cell, also called mesh, of the multigrid, so $f$ sends a cell of the multigrid to a single vertex.

The \emph{multigrid dual tiling} of a $n$-fold multigrid $\multigrid{n}{\gamma}$, denoted by $\dualtiling{n}{\gamma}$, is defined by its set of vertices $V$ and of edges $E$ as
\[ V := f(\mathbb{C}) \qquad E:= \left\{ \edge{z}{z'},\ z,z' \in V |\ \exists i, z' = z + \zeta_n^i\right\}\]  

In this dualization process each cell or mesh of the multigrid is sent to a vertex of the dual tiling, see for example the cells in red and yellow and their dual in Figure \ref{fig:example_grid_tiling}, and each intersection point of the multigrid is sent to a tile of the dual tiling. The dual of an intersection point where $k$ lines intersect is a $2k$-gon with unit sides as shown in Figure \ref{fig:multigrid_intersection} for the case of 5-fold multigrids. The dual of a line of the multigrid is a chain or ribbon of tiles that share an edge, see for example the green and blue line and their dual in Figure \ref{fig:example_grid_tiling}.
Recall that multigrid dual tilings are cut-and-project tilings \cite{senechal1996,gahler1986,baake2013}.

\begin{figure}[t]
    \includegraphics[width=\textwidth]{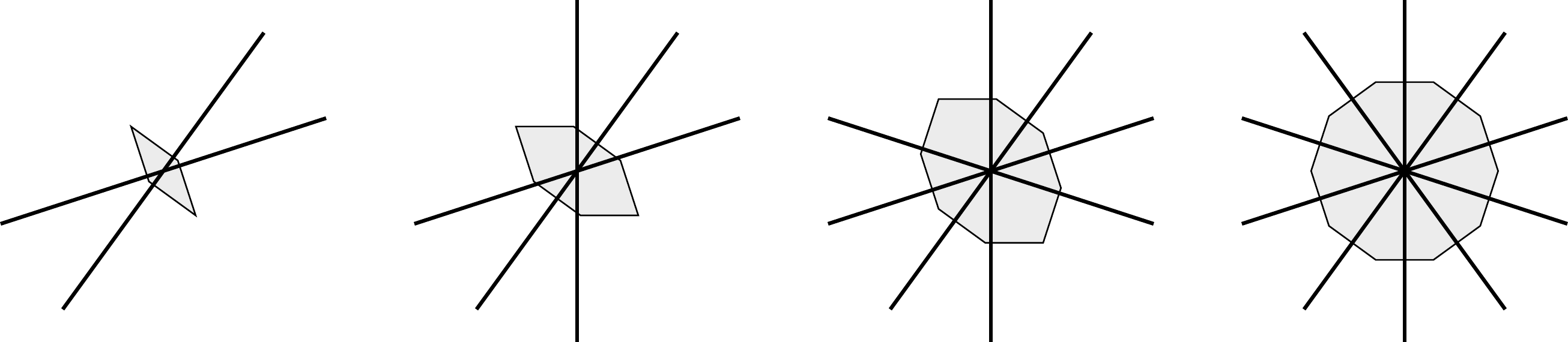}
    \caption{Possible intersection points in $\multigrid{5}{\gamma}$ and their dual tiles }
    \label{fig:multigrid_intersection}
\end{figure}

A multigrid is called \emph{singular} when there is at least one intersection point where at least 3 lines intersect, and is called \emph{regular} otherwise. By definition, the dual tilings of regular multigrids are edge-to-edge rhombus tilings.
For an example of a regular multigrid and its dual tiling, see Figure \ref{fig:example_grid_tiling} where some elements of the multigrid and their dual are highlighted to emphasize the dualization process.

Now that we have defined all relevant terms, we can state the main result.

\begin{theorem}~
  \begin{enumerate}
  \item for any integer $n\geq 4$, the $n$-fold multigrid dual tiling $\dualtiling{n}{\tfrac{1}{2}}$ is a cut-and-project quasiperiodic edge-to-edge rhombus tiling with global $2n$-fold rotational symmetry.
  \item for any \textbf{odd} integer $n\geq 5$, the $n$-fold multigrid dual tiling $\dualtiling{n}{\tfrac{1}{n}}$ is a cut-and-project quasiperiodic edge-to-edge rhombus tiling with global $n$-fold rotational symmetry. 
  \end{enumerate}
  \label{thm:tilings}
\end{theorem}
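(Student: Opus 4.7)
The plan is to split the theorem into four claims and treat them sequentially: (i) the rotation by $\pi/n$ (case 1) or $2\pi/n$ (case 2) leaves the multigrid $\multigrid{n}{\gamma}$ invariant; (ii) the dual map $f$ intertwines this rotation, so the dual tiling inherits the symmetry; (iii) the multigrid is regular; (iv) the dual tiling is a cut-and-project quasiperiodic edge-to-edge rhombus tiling. The only genuinely hard step is (iii), where the paper's main technical contribution lies.

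For (i) I rely on the identities $R_\theta\cdot\grid{\xi}{\gamma}=\grid{e^{\imag\theta}\xi}{\gamma}$ and $\grid{-\xi}{\gamma}=\grid{\xi}{-\gamma}$. In case (1), rotation by $\pi/n$ sends $\zeta_n^i$ either to $\zeta_n^{i+1}$ (when $n$ is even, since $e^{\imag\pi/n}=\zeta_n$) or to $-\zeta_n^{j(i)}$ for a suitable permutation $j$ (when $n$ is odd, since the $e^{\imag\pi(2i+1)/n}$ exhaust the $2n$-th roots of unity that are not $n$-th roots); in both sub-cases the offset $\tfrac12$ is fixed because $\tfrac12\equiv-\tfrac12\pmod 1$. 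In case (2), rotation by $2\pi/n$ cyclically permutes the $\zeta_n^i$ and trivially preserves the common offset $\tfrac1n$. For (ii) I compute $f(e^{\imag\theta}z)$ using $\real{e^{\imag\theta}z\,\overline{\zeta_n^i}}=\real{z\,\overline{\zeta_n^i e^{-\imag\theta}}}$; substituting the permutation from (i) and reorganising the sum yields $e^{\imag\theta}f(z)$. The odd-$n$ sub-case of (1) requires the ancillary identities $\lceil-x-\tfrac12\rceil=-\lceil x-\tfrac12\rceil$ (valid on the interior of each cell once we have regularity) and $-\zeta_n^{(n+1)/2}=e^{\imag\pi/n}$.

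Step (iii) is the main obstacle and the reason trigonometric diophantine equations enter. A triple intersection corresponds to three distinct indices $i_1,i_2,i_3$ and integers $k_1,k_2,k_3$ with a common solution $z$ of $\real{z\,\overline{\zeta_n^{i_j}}}=k_j+\gamma_{i_j}$; solving the first two equations for $z$ and substituting into the third produces a real relation of the form
\[
\alpha_1(k_1+\gamma_{i_1})+\alpha_2(k_2+\gamma_{i_2})+\alpha_3(k_3+\gamma_{i_3})=0,
\]
whose coefficients $\alpha_j$ are sines of differences of the grid angles, and hence imaginary parts of roots of unity of order dividing $2n$. Clearing denominators and rewriting sines as exponentials converts the identity into a vanishing $\mathbb{Z}$-linear combination of a bounded number of roots of unity. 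At this point I invoke the Conway--Jones theorem \cite{conway1976}, which enumerates all minimal vanishing sums of roots of unity with few terms, and carry out a case analysis on its list; specialised to offsets $\gamma_i=\tfrac12$ or $\gamma_i=\tfrac1n$, each candidate configuration fails an arithmetic sanity check, ruling out triple intersections. The specific choice of offset is precisely what kills the otherwise-admissible Conway--Jones relations.

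With regularity in hand, (iv) is assembled from known ingredients: every intersection point of $\multigrid{n}{\gamma}$ meets exactly two lines, so its dual tiling is edge-to-edge with rhombus tiles; multigrid-dualisation coincides with the cut-and-project construction \cite{gahler1986,baake2013}; cut-and-project tilings are uniformly recurrent \cite{senechal1996}; and the symmetry orders $2n\geq 8$ (case 1) and odd $n\geq 5$ (case 2) lie outside $\{2,3,4,6\}$, so the crystallographic restriction forces non-periodicity. Combined with uniform recurrence this yields quasiperiodicity, completing Theorem~\ref{thm:tilings}.
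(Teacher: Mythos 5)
Your steps (i), (ii) and (iv) are correct and follow essentially the same route as the paper: the multigrid $\multigrid{n}{\tfrac{1}{2}}$ (resp.\ $\multigrid{n}{\tfrac{1}{n}}$ for odd $n$) is invariant under the rotation by $\tfrac{\pi}{n}$ (resp.\ $\tfrac{2\pi}{n}$) thanks to $\grid{-\xi}{\gamma}=\grid{\xi}{-\gamma}$ and the invariance of the offset modulo $1$, dualization commutes with this rotation (your ceiling identity needs only that $z$ lies off the grid lines, not regularity, but that is harmless), and regularity plus the cited facts that multigrid dual tilings are cut-and-project, hence uniformly recurrent, plus the crystallographic restriction give the edge-to-edge rhombus, non-periodicity and quasiperiodicity claims. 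This is exactly the content of Section \ref{sec:regular_tilings}.

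The genuine gap is step (iii). Regularity of $\multigrid{n}{\tfrac{1}{2}}$ and $\multigrid{n}{\tfrac{1}{n}}$ is the paper's main technical result (Theorem \ref{thm:regularity}), and you neither invoke it as a known statement nor prove it: after reducing a triple intersection to a rational relation among three sines (this is Proposition \ref{prop:regularity}), you assert that ``each candidate configuration fails an arithmetic sanity check'', which is precisely the work that remains to be done. Moreover, your description of how the contradiction arises is miscalibrated. After converting to cosines, Corollary \ref{cor:tricosine} (derived from Theorem \ref{thm:conwayjones}) pins the three reduced angles to $\left\{\tfrac{\pi}{5},\tfrac{\pi}{3},\tfrac{2\pi}{5}\right\}$ or $\left\{\theta,\tfrac{\pi}{3}-\theta,\tfrac{\pi}{3}+\theta\right\}$, and what rules these out is the additive constraint $\tfrac{2p\pi}{n}=\tfrac{2q\pi}{n}+\tfrac{2(p-q)\pi}{n}$ among the original angles; this argument works for arbitrary non-zero rational offsets when $n$ is odd, so it is not the ``specific choice of offset'' that kills the Conway--Jones configurations. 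The particular value of the offset intervenes only in the even-$n$ case when the three reduced angles coincide, where one needs $r_0+r_p-r_q\neq 0$, guaranteed because all offsets equal the same non-zero rational (here $\tfrac{1}{2}$), and in the trivial requirement that the coefficients $r_0,r_p,r_q$ be non-zero. Without carrying out this case analysis (or simply quoting Theorem \ref{thm:regularity}), your argument does not yet establish the theorem.
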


Theorem \ref{thm:tilings} is actually a corollary of Theorem \ref{thm:regularity} on the regularity of $n$-fold multigrids. See Figures \ref{fig:nfold_789} and \ref{fig:P23} for examples of tilings $\dualtiling{n}{\tfrac{1}{2}}$ and $\dualtiling{n}{\tfrac{1}{n}}$.

\begin{theorem}~
  \begin{enumerate}
  \item for any $n\geq 3$ and any non-zero rational offset $r\in\mathbb{Q}\cap \openinterval{0}{1}$ the $n$-fold multigrid $\multigrid{n}{r}$ is regular
  \item for any \textbf{odd} $n\geq 3$ and any tuple of non-zero rational offsets $\gamma = (\gamma_i)_{0\leq i < n} \in \left(\mathbb{Q}\cap \openinterval{0}{1}\right)^n$ the $n$-fold multigrid $\multigrid{n}{\gamma}$ is regular
  \end{enumerate}
  \label{thm:regularity}
\end{theorem}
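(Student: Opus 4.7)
The plan is to reduce regularity of $\multigrid{n}{\gamma}$ to a diophantine question about trigonometric sums and then invoke the Conway--Jones theorem on vanishing sums of roots of unity \cite{conway1976}. A singular intersection occurs exactly when three distinct grid lines are concurrent; for grids indexed by distinct $i, j, k$, this amounts to the existence of integers $k_i, k_j, k_k$ and $z \in \mathbb{C}$ satisfying $\real{z\,\overline{\zeta_n^m}} = \gamma_m + k_m$ for $m \in \{i,j,k\}$. Writing $\zeta_n^m = \cos\theta_m + \imag\sin\theta_m$ and expanding the vanishing-determinant condition for the augmented $3\times 3$ matrix of this linear system in the real and imaginary parts of $z$ yields the equivalent algebraic equation
\[
(\gamma_i + k_i)\sin(\theta_k - \theta_j) + (\gamma_j + k_j)\sin(\theta_i - \theta_k) + (\gamma_k + k_k)\sin(\theta_j - \theta_i) = 0.
\]
So $\multigrid{n}{\gamma}$ is regular iff this equation admits no integer solution $(k_i, k_j, k_k)$ for any triple of distinct indices.

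Each angle difference $\theta_m - \theta_{m'}$ is a rational multiple of $\pi$ with denominator dividing $2n$, so via $2\imag\sin\phi = e^{\imag\phi} - e^{-\imag\phi}$ each sine rewrites as a $\mathbb{Z}$-linear combination of two $2n$-th roots of unity. After clearing the rational denominators of the coefficients $\gamma_m + k_m$, the displayed equation transforms into a vanishing $\mathbb{Q}$-linear relation among at most six roots of unity in the cyclotomic field $\mathbb{Q}(e^{\imag\pi/n})$. The Conway--Jones theorem classifies such vanishing sums: up to multiplication by a root of unity, every minimal relation is assembled from the basic cyclotomic identities $1 + \omega_p + \omega_p^2 + \cdots + \omega_p^{p-1} = 0$ for primes $p$, with a sharp lower bound on the number of summands controlled by the smallest prime involved.

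For part (1), substituting $\gamma_m = r$ and applying the identity
\[
\sin(\theta_k - \theta_j) + \sin(\theta_j - \theta_i) + \sin(\theta_i - \theta_k) = 4\sin\tfrac{\theta_j - \theta_i}{2}\sin\tfrac{\theta_k - \theta_j}{2}\sin\tfrac{\theta_k - \theta_i}{2}
\]
separates the $r$-contribution into a single nonzero algebraic expression; the residual integer combination $k_i\sin(\theta_k - \theta_j) + k_j\sin(\theta_i - \theta_k) + k_k\sin(\theta_j - \theta_i)$ would have to equal $-4r$ times this triple product, and a Conway--Jones case analysis rules this out for integer $k_m$ and $r \in \openinterval{0}{1}\cap\mathbb{Q}$. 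For part (2), the odd-$n$ hypothesis gives $\imag \notin \mathbb{Q}(\zeta_n)$, and the $\mathbb{Q}$-span of the sines $\sin(\theta_k - \theta_j), \sin(\theta_i - \theta_k), \sin(\theta_j - \theta_i)$ inside the real subfield of $\mathbb{Q}(\zeta_n)$ is, by Conway--Jones, constrained tightly enough to force at least one of the $\gamma_m + k_m$ to vanish, which contradicts $\gamma_m \in \openinterval{0}{1}\cap\mathbb{Q}$.

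The main obstacle is the detailed case analysis that matches the Conway--Jones classification to the specific shape of our six-term relation: one must enumerate the admissible minimal vanishing subsums coming from $n$-fold angle differences and check each against the integrality of the $k_m$ in part (1) and against the linear structure of the three sines in part (2). The odd-$n$ hypothesis in part (2) is essential: in the even case there are additional vanishing relations realizable by asymmetric rational tuples, already visible for $n = 6$ where the choice of indices $0, 2, 4$ yields the simple condition $\gamma_0 - \gamma_2 + \gamma_4 \in \mathbb{Z}$, which is easily satisfied by suitable rational offsets.
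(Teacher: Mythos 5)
Your reduction is the same as the paper's: the concurrency of three grid lines is equivalent to a vanishing three-term sine combination with coefficients in $\mathbb{Z}\pm\gamma_m$ (this is the paper's Proposition \ref{prop:regularity}), and the plan to feed this into Conway--Jones is also the paper's plan. The genuine gap is that the part carrying all the mathematical weight --- the case analysis matching the Conway--Jones classification against the specific structure of these three angles --- is not done; you explicitly defer it as ``the main obstacle''. In the paper this is the content of Corollary \ref{cor:tricosine} and Section \ref{sec:proof}: after converting sines to cosines of angles in $\openinterval{0}{\tfrac{\pi}{2}}$, Conway--Jones leaves only two possible angle triples, $\{\tfrac{\pi}{5},\tfrac{\pi}{3},\tfrac{2\pi}{5}\}$ and $\{\phi,\tfrac{\pi}{3}-\phi,\tfrac{\pi}{3}+\phi\}$, and both are then eliminated using the additive constraint $\tfrac{2p\pi}{n}=\tfrac{2q\pi}{n}+\tfrac{2(p-q)\pi}{n}$ (resp.\ $\tfrac{p\pi}{n}=\tfrac{q\pi}{n}+\tfrac{(p-q)\pi}{n}$) together with the explicit preimages of $\phi$; for even $n$ one must additionally handle the degenerate cases where a normalized angle is $0$ (i.e.\ some $\tfrac{k\pi}{n}=\tfrac{\pi}{2}$) or where all three normalized angles coincide (e.g.\ $n=6$, $q=2$, $p=4$), and the latter is exactly where the hypothesis of equal offsets is used. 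None of that enumeration appears in your proposal.

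Moreover, the two mechanisms you do invoke are not sufficient substitutes. For part (1), the triple-product identity cleanly isolates the $r$-contribution, but it only settles the case where the three sines are proportional (there the equation reduces to $r+k_0+k_p-k_q=0$, impossible for $r\in\openinterval{0}{1}$, and Conway--Jones plays no role); in all other cases you still face the same classification problem as before, now for the integer combination, so nothing has been gained. For part (2), the claim that $\imag\notin\mathbb{Q}(\zeta_n)$ for odd $n$ ``constrains the $\mathbb{Q}$-span of the sines tightly enough to force at least one coefficient $\gamma_m+k_m$ to vanish'' is an assertion of the statement to be proved, not an argument: the actual odd/even dichotomy in the paper has nothing to do with whether $\imag$ lies in the cyclotomic field, but with the fact that for odd $n$ the angles $\tfrac{2k\pi}{n}$ can never be right angles and the three normalized angles can never all coincide, which is what makes Corollary \ref{cor:tricosine} directly applicable with arbitrary (non-zero rational) offsets. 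Your closing remark about $n=6$ and $\gamma_0-\gamma_2+\gamma_4\in\mathbb{Z}$ is correct and matches the paper's discussion, but the core of the proof remains missing.
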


Remark that Theorem \ref{thm:regularity} is not an exact characterization of regular $n$-fold multigrid but rather an easily checked sufficient condition for regularity. In \cite{debruijn1981} N. G. de Bruijn gives an exact characterization of regular pentagrids in the specific Penrose case (sum of the offsets is an integer), but this characterization is not easily generalized to the non-Penrose case and to all $n$-fold multigrids.

In Section \ref{sec:regular_tilings} we show how Theorem \ref{thm:tilings} is a corollary of Theorem \ref{thm:regularity}.
In Section \ref{sec:regular_trigonometric} we present the link between the regularity of multigrids and some trigonometric equations.
In Section \ref{sec:trigonometric_diophantine} we present the results of Conway and Jones on trigonometric diophantine equations \cite{conway1976}.
In Section \ref{sec:proof} we combine the results of Sections \ref{sec:regular_trigonometric} and \ref{sec:trigonometric_diophantine} to prove Theorem \ref{thm:regularity}.

\section{From regular $n$-fold multigrids to tilings with global $n$-fold symmetry}
\label{sec:regular_tilings}
\begin{figure}[t]
\center
\begin{subfigure}[b]{0.3\textwidth}
\includegraphics[width=\textwidth]{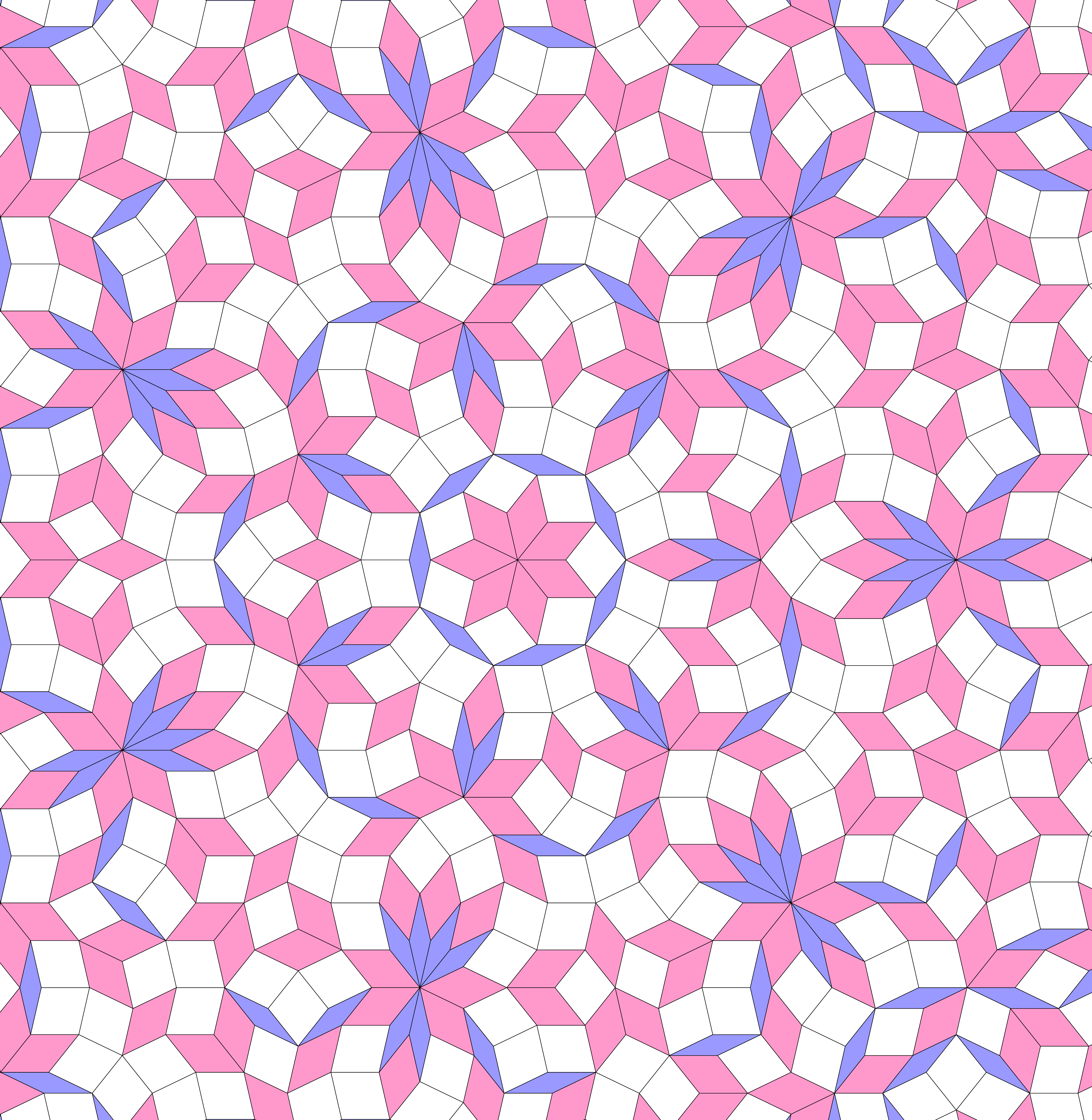}
\caption{$7$-fold : $\dualtiling{7}{\tfrac{1}{7}}$ }
\label{fig:P7}
\end{subfigure}
~
\begin{subfigure}[b]{0.3\textwidth}
\includegraphics[width=\textwidth]{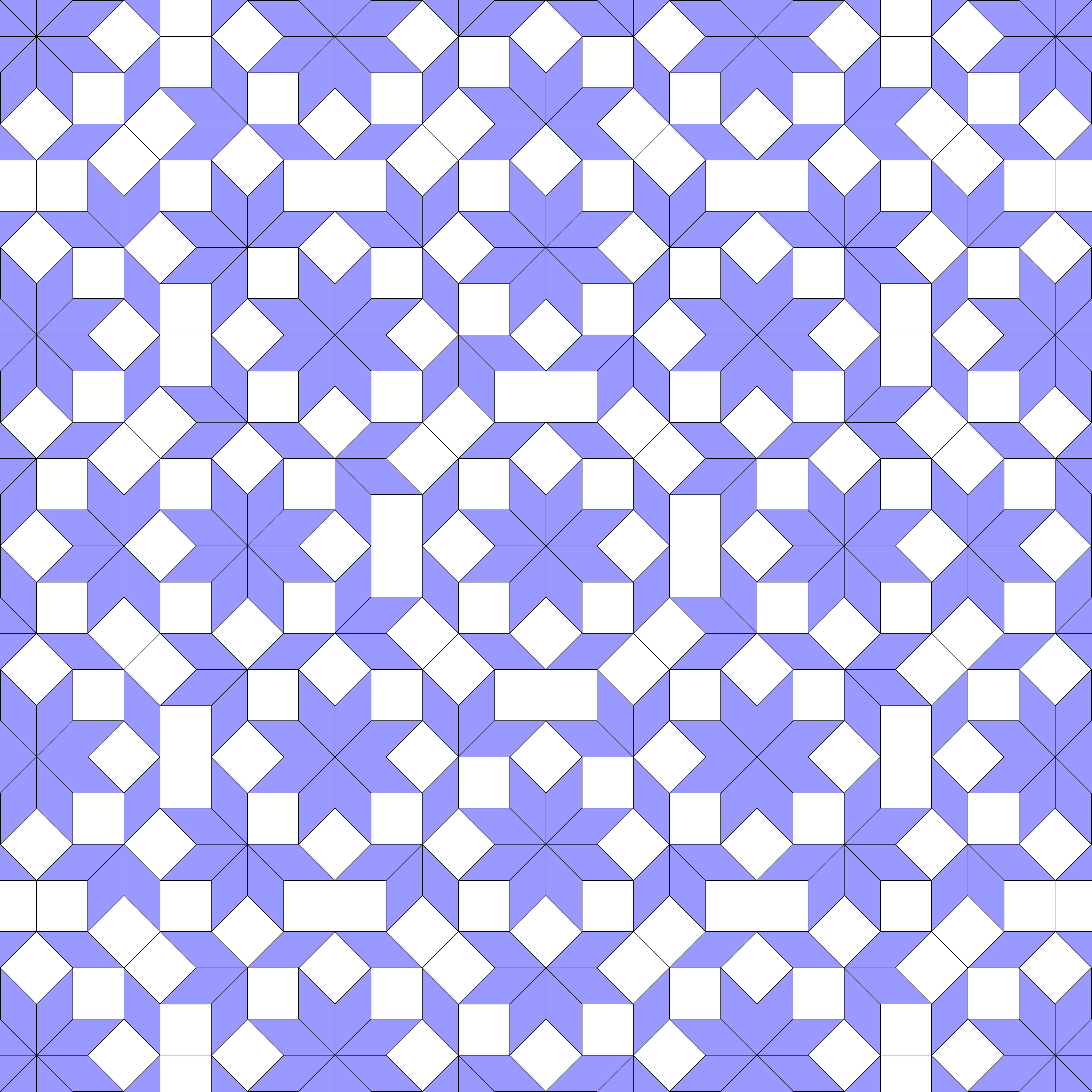}
\caption{$8$-fold : $\dualtiling{4}{\tfrac{1}{2}}$}
\label{fig:P4-2}
\end{subfigure}
~
\begin{subfigure}[b]{0.3\textwidth}
\includegraphics[width=\textwidth]{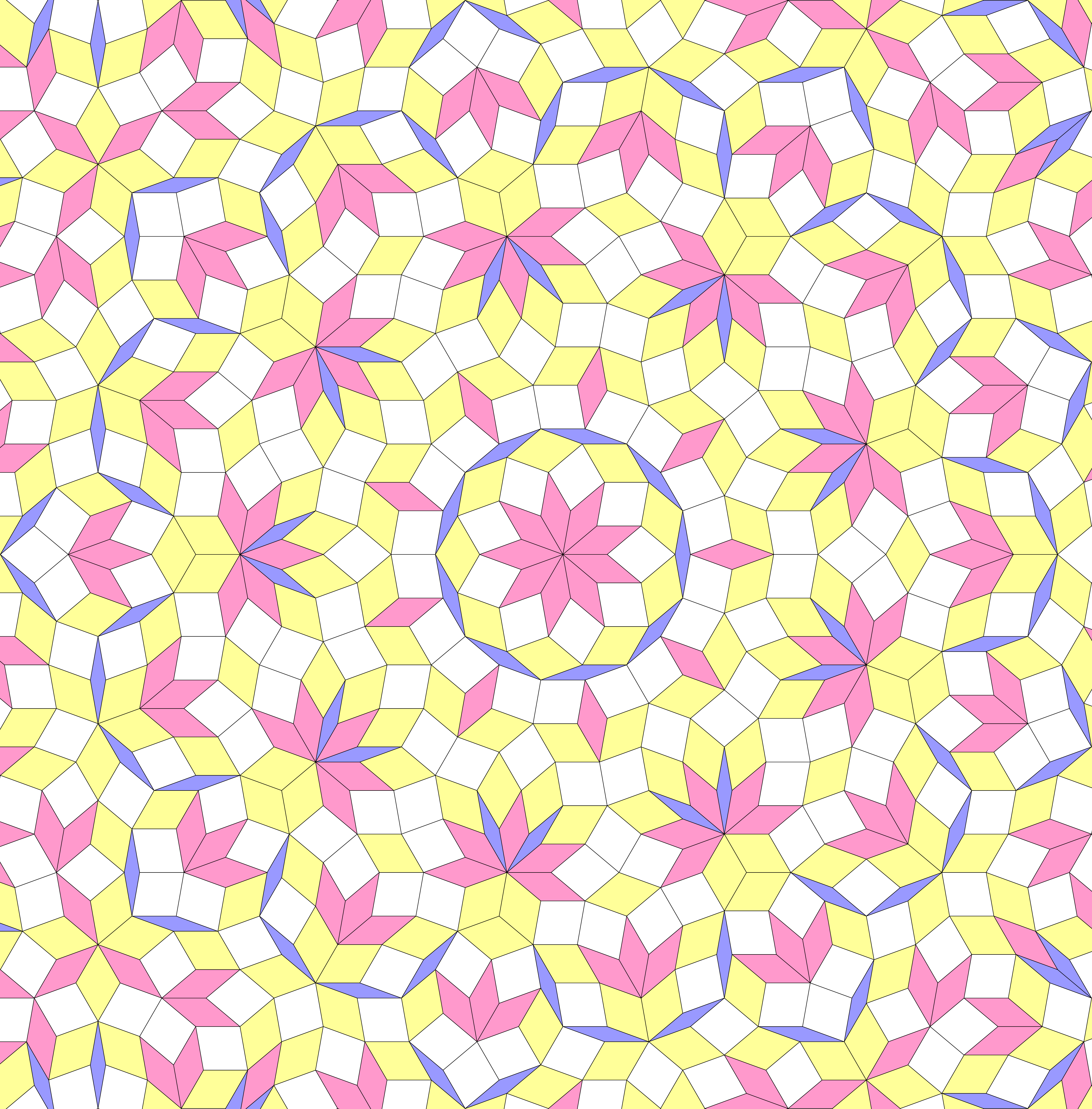}
\caption{$9$-fold : $\dualtiling{9}{\tfrac{1}{9}}$}
\label{fig:P9}
\end{subfigure}

~
\begin{subfigure}[b]{0.29\textwidth}
\includegraphics[width=\textwidth]{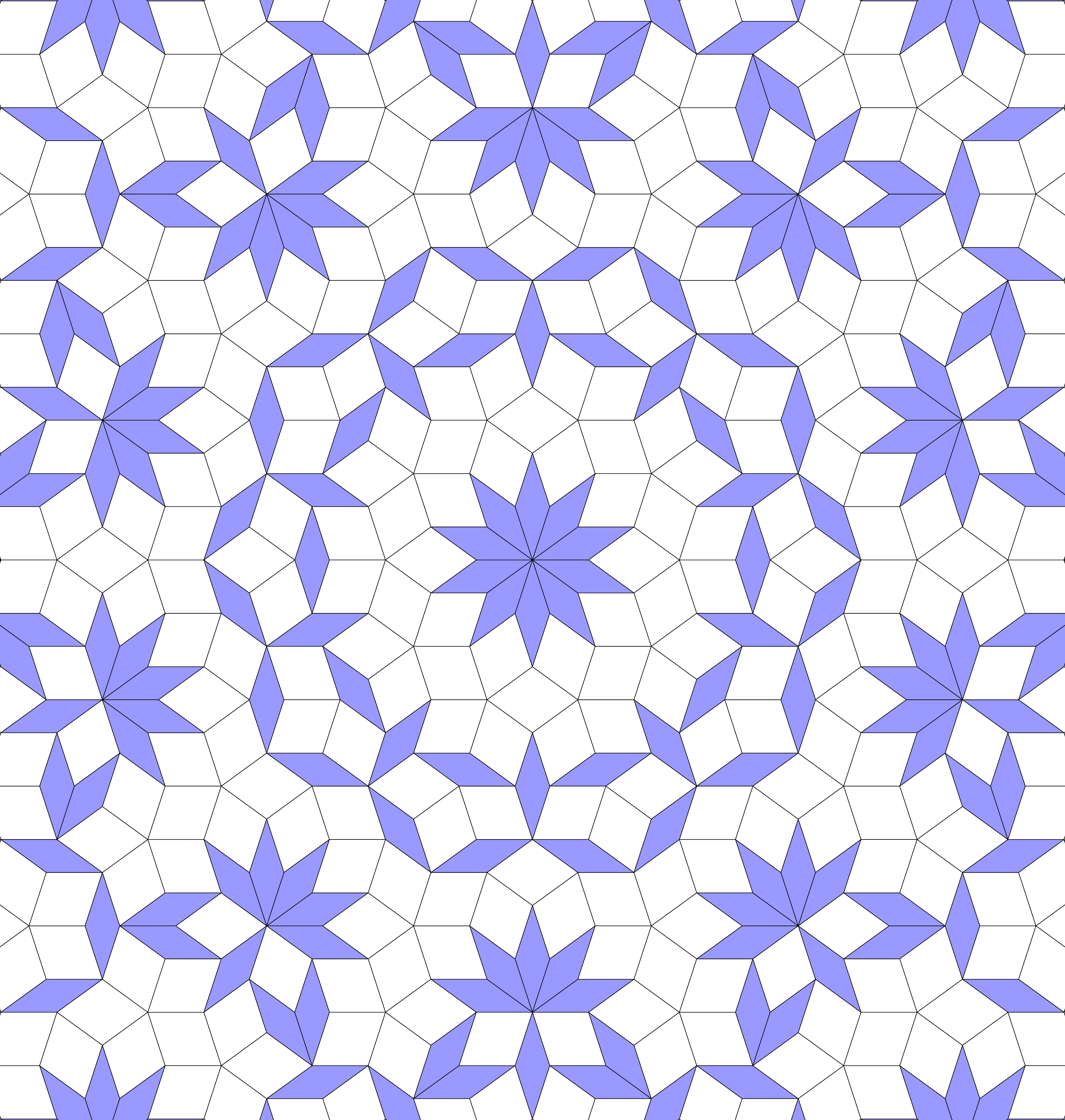}
\caption{$10$-fold : $\dualtiling{5}{\tfrac{1}{2}}$}
\label{fig:P5-2}
\end{subfigure}
~
\begin{subfigure}[b]{0.3\textwidth}
\includegraphics[width=\textwidth]{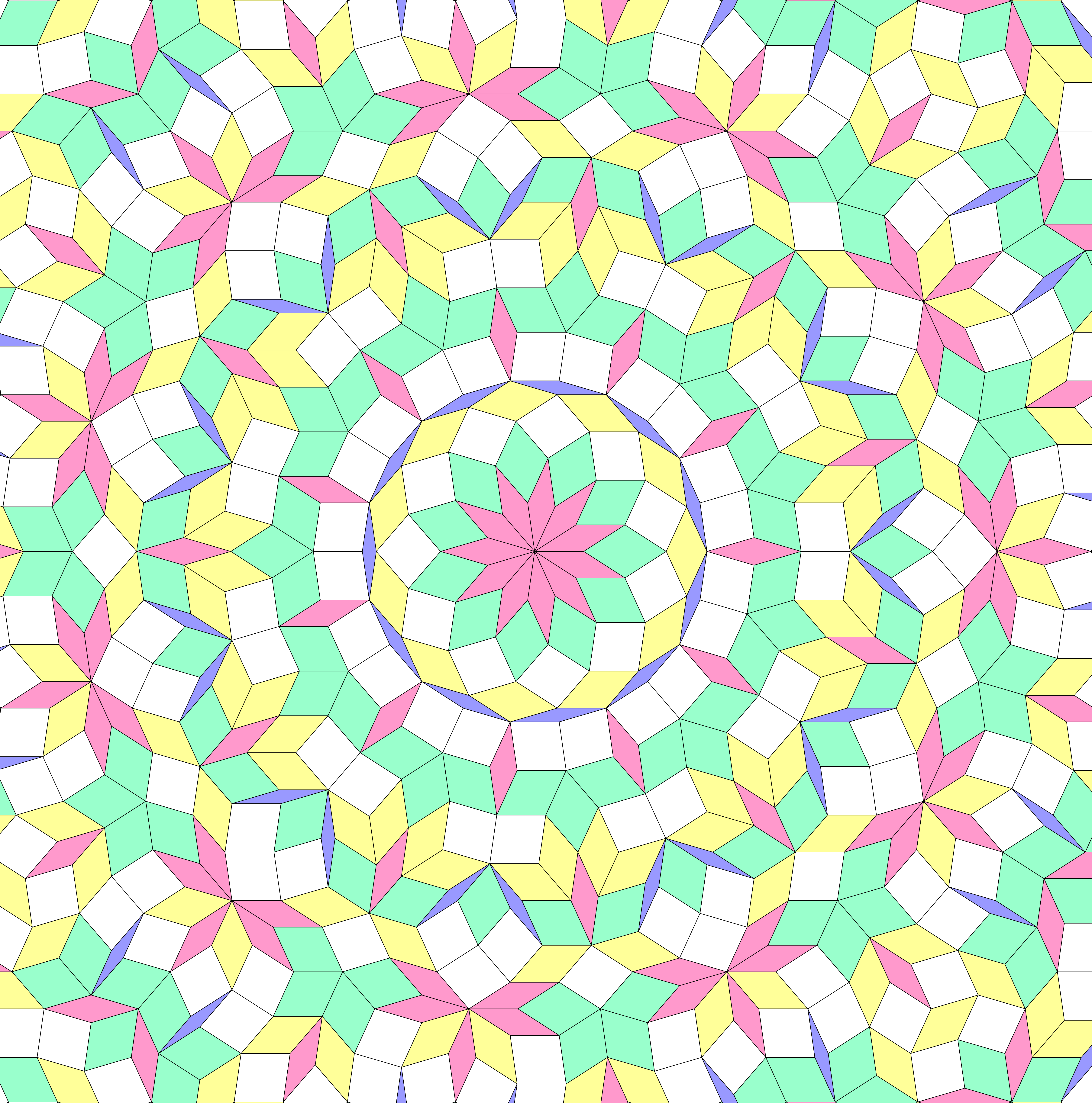}
\caption{$11$-fold : $\dualtiling{11}{\tfrac{1}{11}}$}
\label{fig:P11}
\end{subfigure}
~
\begin{subfigure}[b]{0.3\textwidth}
\includegraphics[width=\textwidth]{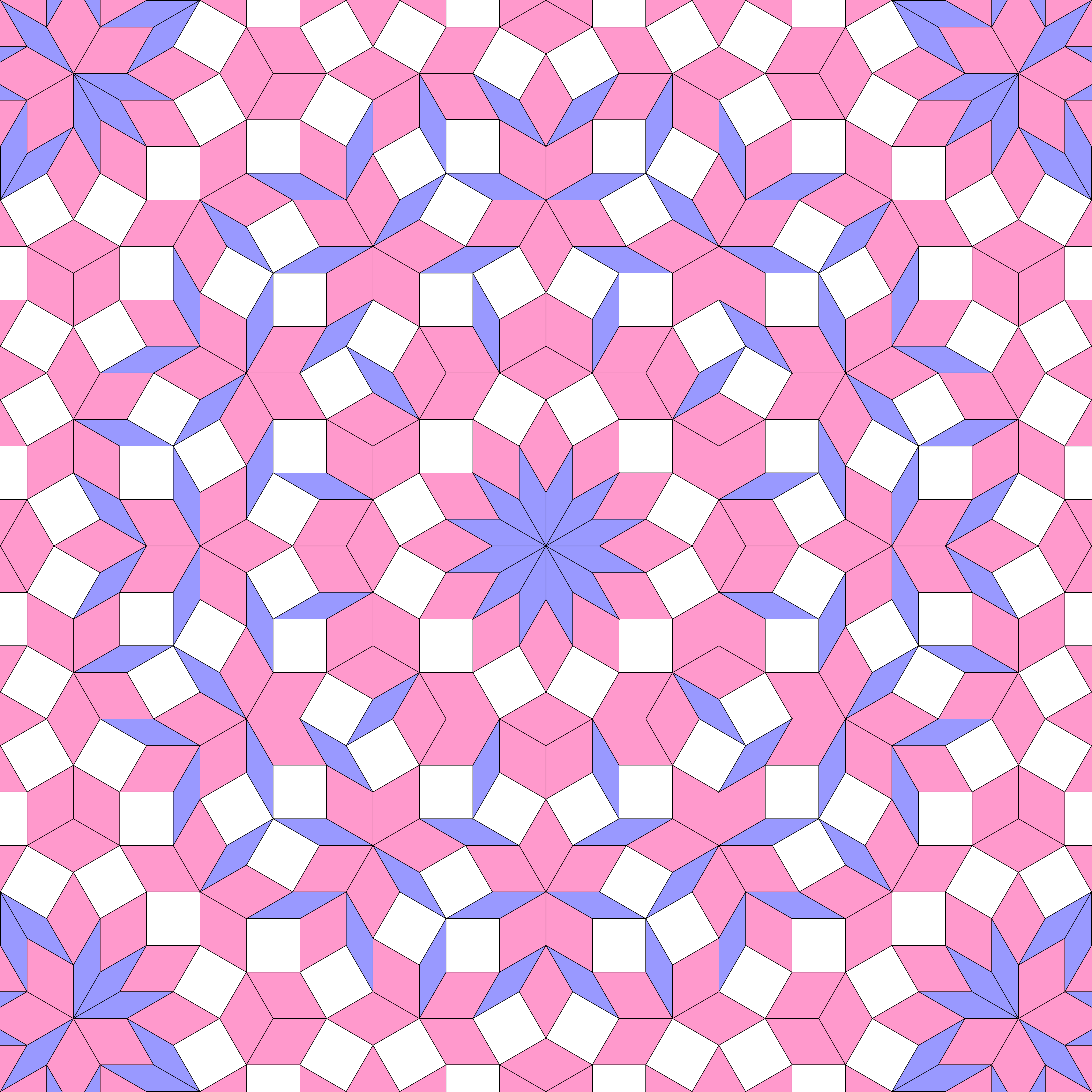}
\caption{$12$-fold : $\dualtiling{6}{\tfrac{1}{2}}$}
\label{fig:P6-2}
\end{subfigure}
\caption{Central patch of the multigrid dual tiling with exactly $n$-fold rotational symmetry for $n\in\{7,8,9,10,11,12\}$}
  \label{fig:nfold_789}
\end{figure}
Let $n\geq 3$ be an integer. By Theorem \ref{thm:regularity} the multigrid $\multigrid{n}{\tfrac{1}{2}}$ and $\multigrid{n}{\tfrac{1}{n}}$ are regular, so their dual tilings $\dualtiling{n}{\tfrac{1}{2}}$ and $\dualtiling{n}{\tfrac{1}{n}}$ are edge-to-edge rhombus tilings. As mentioned above, the multigrid dual tilings are cut-and-project \cite{gahler1986} and therefore also uniformly recurrent.

Let us remark that the dualization process commutes with rotations around the origin, so if a multigrid has some rotational symmetry around the origin then so does its dual tiling.

So for odd $n$, $\dualtiling{n}{\tfrac{1}{n}}$ has global $n$-fold rotational symmetry because the grid also has global $n$-fold rotational symmetry, indeed applying the rotation of angle $\tfrac{2\pi}{n}$ centered on the origin to the multigrid sends $\zeta_n^k$ to $\zeta_n^{k+1}$ and since the offset is the same on all directions the rotated multigrid is the same as the original one.
However this does not hold for even $n$ since in that case we chose $\zeta_n=e^{\imag\frac{\pi}{n}}$ so we have $\zeta_n^{n}= e^{\imag\pi} = -1 = - \zeta_n^0$ so the grid of offset $\tfrac{1}{n}$ for even $n\geq 4$ does not have any rotational symmetry.

Also for odd $n$, $\dualtiling{n}{\tfrac{1}{2}}$ has global $2n$-fold rotational symmetry because the image of $\zeta_n^0$ by the rotation of angle $\tfrac{\pi}{n}$ is \[e^{\imag\tfrac{\pi}{n}} = -e^{\imag\frac{(n+1)\pi}{n}} = -e^{\imag\frac{2\ceil{\frac{n}{2}}\pi}{n}} = - \zeta_n^{\ceil{\frac{n}{2}}}\]  so with $1-\tfrac{1}{2} = \tfrac{1}{2}$ (\ie the offset in direction $\zeta_n^i$ and in its reverse direction $-\zeta_n^i$ is the same) we get global $2n$-fold rotational symmetry for $\multigrid{n}{\tfrac{1}{2}}$ and $\dualtiling{n}{\tfrac{1}{2}}$. For even $n$ we use also the fact that with offset $\tfrac{1}{2}$ we get offset along $\zeta_n^i$ is the same as along the opposite direction $-\zeta_n^i$ together with $\zeta_n=e^{\imag\tfrac{\pi}{n}}$ which means that $\zeta_n^{n+i} = -\zeta_n^i$ to get global $2n$-fold rotational symmetry for $\multigrid{n}{\tfrac{1}{2}}$ and $\dualtiling{n}{\tfrac{1}{2}}$.

When we combine these with the crystallographic restriction which implie that for any $n\geq 4$ these tilings are non-periodic we get Theorem \ref{thm:tilings}.
Remark that for even $n$, $\dualtiling{n}{\tfrac{1}{2}}$ has global $2n$-fold rotational symmetry and $\dualtiling{\frac{n}{2}}{\tfrac{1}{2}}$ has exactly $n$-fold global rotational symmetry. So for any $n$ there exists a tiling with exactly $n$-fold global rotational symetry, see the examples for $n\in\{7,8,9,10,11,12\}$ in Figure \ref{fig:nfold_789}, and for $n=23$ in Figure \ref{fig:P23}.

Remark also that for odd $n$, for any $r\in\mathbb{Q}\cap\openinterval{0}{1}$ the multigrid $\multigrid{n}{r}$ is regular and the multigrid dual tiling $\dualtiling{n}{r}$ has global $n$-fold rotational symmetry, except the specific case $r=\tfrac{1}{2}$ that has global $2n$-fold rotational symmetry. The choice of $r=\tfrac{1}{n}$ is mainly due to the fact that the canonical Penrose rhombus tiling is $\dualtiling{5}{\tfrac{1}{5}}$ so $\dualtiling{n}{\tfrac{1}{n}}$ is in that sense a generalization of the canonical Penrose rhombus tiling.

Note that Theorem \ref{thm:tilings} is stated for $n> 3$ because for $n=3$ the multigrid dual tilings with offset $\tfrac{1}{2}$ and $\tfrac{1}{3}$ are periodic.

\section{Regularity of $n$-fold multigrids and trigonometric equations}
\label{sec:regular_trigonometric}

In this section we present the link between the regularity or singularity of $n$-fold multigrids and some trigonometric equations.

\begin{proposition}[Regularity of multigrids and trigonometric equations]
  \label{prop:regularity}
  Let $n\in\mathbb{N}$, and $\gamma_0,\gamma_1,\dots \gamma_{n-1}$ be offsets in $\closedopeninterval{0}{1}$.
  Assume that for any $p,q$ such that $0<q<p<n$ and any $r_0 \in \mathbb{Z}-\gamma_0,\ r_q \in \mathbb{Z}-\gamma_q$ and $r_p \in \mathbb{Z}-\gamma_p$ we have either Inequation (\ref{equ:trigo1_odd}) when $n$ is odd, or Inequation (\ref{equ:trigo1_even}) when $n$ is even.
  \begin{align}\label{equ:trigo1_odd} (n \text{ odd })\qquad r_0\sin\tfrac{2(p-q)\pi}{n} + r_p\sin\tfrac{2q\pi}{n} - r_q\sin\tfrac{2p\pi}{n} &\neq 0\\
    \label{equ:trigo1_even} (n \text{ even })\qquad r_0\sin\tfrac{(p-q)\pi}{n} + r_p\sin\tfrac{q\pi}{n} - r_q\sin\tfrac{p\pi}{n} &\neq 0\end{align}
    Then the grid $G_n(\gamma_0,\gamma_1,\dots \gamma_{n-1})$ is regular.
\end{proposition}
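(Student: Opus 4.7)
The plan is to rephrase ``$\multigrid{n}{\gamma}$ is singular'' as a three-line concurrency statement, translate concurrency into the vanishing of a $3\times 3$ determinant via Rouché--Capelli, and then recognise the expansion of that determinant as exactly the trigonometric expression that Inequations~(\ref{equ:trigo1_odd})--(\ref{equ:trigo1_even}) forbid from vanishing.

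First I would fix three grid indices. Up to rotating the multigrid (which just relabels the chosen directions while keeping the offsets attached to their respective lines) one may take them to be $0,q,p$ with $0<q<p<n$. With $\theta_n:=\tfrac{2\pi}{n}$ for odd $n$ and $\theta_n:=\tfrac{\pi}{n}$ for even $n$, a line of $\grid{\zeta_n^j}{\gamma_j}$ has equation
\[x\cos(\theta_n j)+y\sin(\theta_n j)=\gamma_j+k_j,\qquad k_j\in\mathbb{Z},\]
in real coordinates $z=x+\imag y$. A particular triple of lines, one per chosen grid, is concurrent iff the resulting $3\times 2$ linear system in $(x,y)$ is compatible; since $\zeta_n^0,\zeta_n^q,\zeta_n^p$ are pairwise non-collinear the coefficient matrix has rank $2$, so by Rouché--Capelli this is equivalent to the vanishing of the augmented $3\times 3$ determinant.

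Next I would expand that determinant along its third column and use $\cos\alpha\sin\beta-\sin\alpha\cos\beta=\sin(\beta-\alpha)$ to obtain
\[(\gamma_0+k_0)\sin(\theta_n(p-q))-(\gamma_q+k_q)\sin(\theta_n p)+(\gamma_p+k_p)\sin(\theta_n q)=0.\]
Setting $r_j:=-(\gamma_j+k_j)$ for $j\in\{0,q,p\}$ produces three elements of the cosets $\mathbb{Z}-\gamma_0$, $\mathbb{Z}-\gamma_q$, $\mathbb{Z}-\gamma_p$, and every element of each coset is achieved as $k_j$ ranges over $\mathbb{Z}$. After multiplying the displayed equation by $-1$, its left-hand side is precisely the left-hand side of Inequation~(\ref{equ:trigo1_odd}) (respectively (\ref{equ:trigo1_even})). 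Under the hypothesis of the proposition this expression is never zero, so the augmented determinant never vanishes, no triple of lines of $\multigrid{n}{\gamma}$ is concurrent, and the multigrid is regular by the definition in the introduction.

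The calculation itself is routine; I expect the only minor pitfalls to be the cofactor-expansion sign pattern and the sign convention for the $r_j$ (which must be chosen so that $-(\gamma_j+k_j)$ is recognised as ranging over all of $\mathbb{Z}-\gamma_j$). There is no substantive obstacle to overcome beyond this bookkeeping.
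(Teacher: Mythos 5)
Your argument is correct and essentially the paper's own proof: the paper likewise reduces singularity to concurrency of three lines from grids $0,q,p$ (after the same relabel-and-rotate step), writes the three line equations in real coordinates, and eliminates $y$ by the combination $\sin\tfrac{2p\pi}{n}L_2-\sin\tfrac{2q\pi}{n}L_3$ together with $\cos\alpha\sin\beta-\sin\alpha\cos\beta=\sin(\beta-\alpha)$, which is exactly your augmented-determinant expansion. The only differences are cosmetic: the paper phrases it as a proof by contradiction and does the elimination by hand rather than invoking Rouch\'e--Capelli, and it parametrizes the lines directly as $k_j-\gamma_j$ instead of via your sign change $r_j=-(\gamma_j+k_j)$.
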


\begin{proof}
  We will prove this proposition by contradiction \ie we assume a grid is singular and we show that it implies the existance of $r_0, r_p, r_q$ such that the Inequation (\ref{equ:trigo1_odd}) is contradicted if $n$ is odd, and Inequation (\ref{equ:trigo1_even}) is contradicted if $n$ is even.

  We will actually prove it for odd $n$, the proof for even $n$ is exactly the same and it is only needed to replace the formula of $\zeta_n^k$ which in the even case is $e^{\imag\frac{k\pi}{n}}$ instead of $e^{\imag\frac{2k\pi}{n}}$, which means that in the angles we remove the factor 2.
  
Let $n$ be an odd integer and let $\gamma_0,\gamma_1,\dots \gamma_{n-1} \in [0,1[$ such that $G_n(\gamma_0,\gamma_1,\dots \gamma_{n-1})$ is singular.
This means that there exist $z\in\mathbb{C}$ at the intersection of three lines, up to relabeling and rotation we chose to consider it is at the intersection of $H(\zeta_n^0,\gamma_0),\ H(\zeta_n^q, \gamma_q)$ and $H(\zeta_n^p,\gamma_p)$ for some $0<q<p<n$, see Figure \ref{fig:intersection}. This means that there exist $k_0, k_q, k_p \in \mathbb{Z}$ such that
    \[ \begin{cases}
      \text{Re}(z) = k_0 - \gamma_0\\
      \text{Re}(z\cdot \bar{\zeta_n^q}) = k_q - \gamma_q\\
      \text{Re}(z\cdot \bar{\zeta_n^p}) = k_p - \gamma_p
    \end{cases}\]
\begin{figure}[t]
\includegraphics[width=\textwidth]{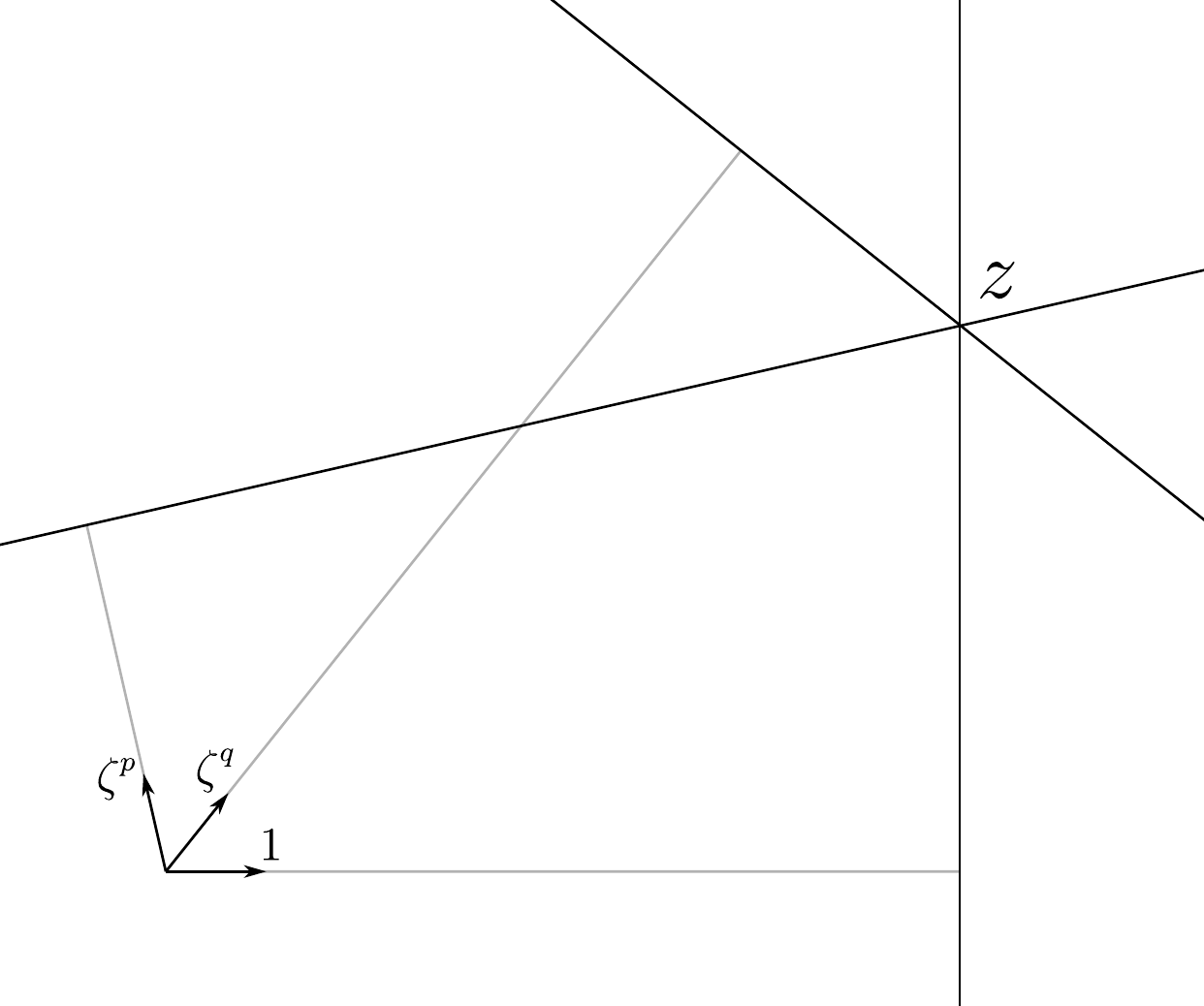}
  \caption{Intersection of three lines}
  \label{fig:intersection}
\end{figure}

Write $z= k_0 - \gamma_0 + iy$ and $\zeta_n=e^{\tfrac{2i\pi}{n}}$. Now we have
\[\begin{cases}
  z = k_0 - \gamma_0 + iy\\
  (k_0 - \gamma_0)\cos\tfrac{2q\pi}{n} + y\sin\tfrac{2q\pi}{n} = k_q - \gamma_q \\
  (k_0 - \gamma_0)\cos\tfrac{2p\pi}{n} + y\sin\tfrac{2p\pi}{n} = k_p - \gamma_p
\end{cases}\]
Let us now cancel out the $y$ terms by $L_3 \leftarrow \sin\tfrac{2p\pi}{n}L_2 - \sin\tfrac{2q\pi}{n}L_3$.
\[\begin{cases}
  z = k_0 - \gamma_0 + iy \\
  (k_0 - \gamma_0)\cos\tfrac{2q\pi}{n} + y\sin\tfrac{2q\pi}{n} = k_q - \gamma_q \\
  (k_0 - \gamma_0)\cos\tfrac{2q\pi}{n}\sin\tfrac{2p\pi}{n}- (k_0 - \gamma_0)\cos\tfrac{2p\pi}{n}\sin\tfrac{2q\pi}{n} = (k_q - \gamma_q)\sin\tfrac{2p\pi}{n} - (k_p - \gamma_p)\sin\tfrac{2q\pi}{n}
\end{cases}\]
Now let us study the third line to simplify it.
\begin{align*}
&(k_0 - \gamma_0)\cos\tfrac{2q\pi}{n}\sin\tfrac{2p\pi}{n} - (k_0 - \gamma_0)\cos\tfrac{2p\pi}{n}\sin\tfrac{2q\pi}{n} = (k_q - \gamma_q)\sin\tfrac{2p\pi}{n} - (k_p - \gamma_p)\sin\tfrac{2q\pi}{n} \\
&\Leftrightarrow(k_0 - \gamma_0)\left(\cos\tfrac{2q\pi}{n}\sin\tfrac{2p\pi}{n} - \cos\tfrac{2p\pi}{n}\sin\tfrac{2q\pi}{n}\right) = (k_q - \gamma_q)\sin\tfrac{2p\pi}{n} - (k_p - \gamma_p)\sin\tfrac{2q\pi}{n}\\
&\Leftrightarrow(k_0 - \gamma_0)\sin\tfrac{2(p-q)\pi}{n} + (k_p +\gamma_p)\sin\tfrac{2q\pi}{n} - (k_q - \gamma_q)\sin\tfrac{2p\pi}{n} = 0\\
\end{align*}
If we rewrite $k_0 - \gamma_0,\ k_p - \gamma_p,\ k_q - \gamma_q$ as $r_0,\ r_p,\ r_q$ we obtain
\[ r_0\sin\tfrac{2(p-q)\pi}{n} + r_p\sin\tfrac{2q\pi}{n} - r_q\sin\tfrac{2p\pi}{n} = 0 \]
which is exactly the contradiction of Inequation (\ref{equ:trigo1_odd}).
\end{proof}

In the next section we consider the solutions to these kind of trigonometric equations.

\section{Trigonometric diophantine equations}
\label{sec:trigonometric_diophantine}

We call \emph{rational angles} the set $\pi\mathbb{Q}$.
We consider now equations of the type
\begin{align} \label{equ:cos_general} A\cos(a) + B\cos(b) + C\cos(c) = 0 \end{align}
with $a,b$ and $c$ rational angles.

In the previous paragraph we had sine instead of cosine but we can always convert sine to cosine, and we had $A\in \mathbb{Z}-\gamma$ for some real number $0\leq \gamma <1$ and similarly for $B$ and $C$ but now we will consider $A,B$ and $C$ to be rationals.

\begin{theorem}[ConwayJones, '76]
  \label{thm:conwayjones}
  Suppose we have at most four distinct rational angles strictly between 0 and $\tfrac{\pi}{2}$ for which some rational linear combination of their cosines has rational value but no proper subset has this property.\\
  Then the appropriate linear combination is proportional to one from the following list:
  \begin{align}
    & \cos\left(\tfrac{\pi}{3}\right) = \tfrac{1}{2} \label{equ:th1}\\
    & -\cos(\phi) + \cos\left(\tfrac{\pi}{3}-\phi\right) + \cos\left(\tfrac{\pi}{3}+\phi \right) = 0 \ (0<\phi<\tfrac{\pi}{6}) \label{equ:th2}\\
    & \cos\left(\tfrac{\pi}{5}\right) - \cos\left(\tfrac{2\pi}{5}\right) = \tfrac{1}{2} \label{equ:th3}\\
    & \cos\left(\tfrac{\pi}{7}\right) - \cos\left(\tfrac{2\pi}{7}\right) + \cos\left(\tfrac{3\pi}{7}\right) = \tfrac{1}{2}\\
    & \cos\left(\tfrac{\pi}{5}\right) - \cos\left(\tfrac{\pi}{15}\right) + \cos\left(\tfrac{4\pi}{15}\right) = \tfrac{1}{2}\\
    & -\cos\left(\tfrac{2\pi}{5}\right) + \cos\left(\tfrac{2\pi}{15}\right) - \cos\left(\tfrac{7\pi}{15}\right) = \tfrac{1}{2}\\
    & \cos\left(\tfrac{\pi}{7}\right) + \cos\left(\tfrac{3\pi}{7}\right) - \cos\left(\tfrac{\pi}{21}\right) + \cos\left(\tfrac{8\pi}{21}\right) = \tfrac{1}{2}\\
    & \cos\left(\tfrac{\pi}{7}\right) - \cos\left(\tfrac{2\pi}{7}\right) + \cos\left(\tfrac{2\pi}{21}\right) - \cos\left(\tfrac{5\pi}{21}\right) = \tfrac{1}{2}\\
    & -\cos\left(\tfrac{2\pi}{7}\right) + \cos\left(\tfrac{3\pi}{7}\right) + \cos\left(\tfrac{4\pi}{21}\right) + \cos\left(\tfrac{10\pi}{21}\right) = \tfrac{1}{2} \\
    & -\cos\left(\tfrac{\pi}{15}\right) + \cos\left(\tfrac{2\pi}{15}\right) + \cos\left(\tfrac{4\pi}{15}\right) - \cos\left(\tfrac{7\pi}{15}\right) = \tfrac{1}{2}
  \end{align}
\end{theorem}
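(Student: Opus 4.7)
The strategy is to translate the question about cosines of rational angles into one about $\mathbb{Q}$-linear relations among roots of unity, and then invoke the classical structure theory of such relations. Writing each $\cos(a_j)=\tfrac12(\zeta_j+\zeta_j^{-1})$ with $\zeta_j$ a root of unity, and letting $N$ be the least common multiple of the orders of the $\zeta_j$, the hypothesis becomes the existence of a non-trivial $\mathbb{Q}$-linear relation
\[ \sum_{j=1}^{k} A_j\bigl(\zeta_j+\zeta_j^{-1}\bigr)\;-\;2r\cdot 1 \;=\; 0 \]
among $N$-th roots of unity, of support size at most $2k+1\le 9$ (the bound $\zeta_j\ne\pm1$ is guaranteed by $a_j\in\openinterval{0}{\tfrac{\pi}{2}}$, so each cosine contributes a genuine conjugate pair). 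The minimality hypothesis in the statement translates into the statement that no proper non-empty sub-sum on the left-hand side is itself rational, i.e.\ the relation is \emph{indecomposable} as a $\mathbb{Q}$-combination of two non-trivial vanishing sub-relations.

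The key input I would use is the well-known structural fact that every vanishing $\mathbb{Q}$-linear relation among $N$-th roots of unity is a $\mathbb{Q}$-linear combination of translates (multiplication by an $N$-th root of unity) of the elementary prime relations
\[ 1+\zeta_p+\zeta_p^2+\cdots+\zeta_p^{p-1}=0, \qquad p\mid N \text{ prime}, \]
together with a quantitative refinement bounding the support size of any indecomposable such relation from below in terms of the primes dividing $N$. Applying this bound to my relation of support size $\le 9$ forces $N$ to be composed only of small primes; a short additional argument on the prime-power exponents then restricts $N$ to a finite list of levels, essentially the divisors of $2\cdot 3\cdot 5\cdot 7$ together with the mixed levels $15$ and $21$.

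The last step is a finite enumeration. For each admissible $N$, I would classify the indecomposable vanishing $\mathbb{Q}$-relations of the required size up to the Galois action $\zeta_N\mapsto\zeta_N^s$ with $\gcd(s,N)=1$, then impose the symmetry that the support be stable under $\eta\mapsto\eta^{-1}$ (reflecting that each cosine contributes a conjugate pair). Translating each orbit representative back into a cosine identity yields the ten listed items: (\ref{equ:th1}) from the trivial level-$3$ relation $1+\zeta_3+\zeta_3^2=0$; (\ref{equ:th2}) from that same relation multiplied by an arbitrary root of unity, producing the one-parameter family indexed by $\phi$; (\ref{equ:th3}) from the primitive $5$-cycle; and the remaining seven identities from the primitive relations at levels $7$, $15$ and $21$.

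The expected main obstacle is this last classification. The general structure theorem gives the qualitative picture cleanly, but extracting \emph{exactly} the ten identities — in particular handling the cyclotomic fields $\mathbb{Q}(\zeta_{15})$ and $\mathbb{Q}(\zeta_{21})$, where several distinct Galois orbits of indecomposable relations with support $\le 9$ occur — requires a careful and lengthy case-by-case analysis. This is where the bulk of the technical work of the Conway--Jones paper lies; the passage from root-of-unity relations back to cosine identities is routine once the classification is in hand.
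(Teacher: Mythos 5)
The paper does not actually prove this theorem: it is imported verbatim from Conway and Jones, with the remark that the original proof ``is based on a more general result on vanishing sums of roots of unity.'' Your proposal correctly identifies that strategy — substitute $\cos(a_j)=\tfrac12(\zeta_j+\zeta_j^{-1})$, view the hypothesis as an indecomposable vanishing $\mathbb{Q}$-linear combination of $N$-th roots of unity of support at most $9$, bound $N$ via the structure theory of such sums, and finish by enumeration. So as a description of the route, you are on the same track as the cited source.

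However, as a proof it has a genuine gap, and the gap sits exactly where the theorem lives. The ``well-known structural fact'' you invoke — that every vanishing sum is a combination of rotated prime relations $1+\zeta_p+\cdots+\zeta_p^{p-1}=0$, \emph{together with} a lower bound on the support of an indecomposable vanishing sum in terms of the primes dividing $N$ — is not background you may assume; the support bound (length $\geq 2+\sum_{p\mid N}(p-2)$ for a minimal vanishing sum of level $N$) is the main theorem of Conway and Jones, and the first part alone does not imply it (a $\mathbb{Q}$-combination of prime relations can have much smaller support than its constituents, so decomposability gives no support bound for free). Likewise the final step, ``classify the indecomposable relations at levels $7$, $15$, $21$ up to Galois action and impose stability under $\eta\mapsto\eta^{-1}$,'' is precisely the case analysis that produces the ten listed identities, and you explicitly defer it. One more technical wrinkle you gloss over: the rational value $r$ on the right-hand side contributes the root of unity $1$ to the support, and the distinctness of the angles does not prevent collisions among the $\zeta_j^{\pm1}$ across different $j$ until one fixes a common level $N$ and argues about coincidences; this bookkeeping matters when counting the support as ``at most $2k+1$.'' In short: correct strategy, but the two load-bearing steps are asserted rather than proved, so this is a roadmap to the Conway--Jones argument rather than a proof of the theorem.
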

See the original article \cite{conway1976} for the proof.
The proof is based on a more general result on vanishing sums of roots of unity. And this is proved using complex numbers and the theory of vanishing formal sums. If we adapt this result for sum of three cosines that have value zero we get.

\begin{corollary}
\label{cor:tricosine}
  Let $a \leq b \leq c$ be rational angles strictly between $0$ and $\tfrac{\pi}{2}$ and not all equal, and let $A,\ B,\ C$ be non-zero rationals.
  
  If $ A\cos(a) + B\cos(b) + C\cos(c) = 0$
  then either
  \[\begin{cases}
    a=\tfrac{\pi}{5}\\
    b=\tfrac{\pi}{3}\\
    c=\tfrac{2\pi}{5}\\
    B = C = -A\\
  \end{cases}\ \text{or} \qquad \begin{cases}
  0<a<\tfrac{\pi}{6}\\
  b = \tfrac{\pi}{3}-a\\
  c = \tfrac{\pi}{3}+a\\
  B = C = -A\\
  \end{cases}\]
\end{corollary}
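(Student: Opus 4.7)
I would apply Theorem~\ref{thm:conwayjones} (Conway--Jones) directly to the vanishing combination $A\cos(a)+B\cos(b)+C\cos(c)=0$, treating $0$ as the rational value. The structural observation underlying the whole plan is that proportionality of a minimal relation to a list item forces the rational value to scale by the same nonzero factor: a minimal relation of value~$0$ can match only a list item of value~$0$, and a minimal relation of nonzero value cannot match a list item of value~$0$. Since the only list item of value~$0$ is equation~(\ref{equ:th2}) and the only short nonzero items relevant here are equation~(\ref{equ:th1}) (one term) and equation~(\ref{equ:th3}) (two terms), the case analysis is essentially forced.

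First I would rule out coinciding angles. If $a=b$, the relation collapses to $(A+B)\cos(a)+C\cos(c)=0$. The subcase $A+B=0$ gives $C\cos(c)=0$, impossible as $C\neq 0$ and $c\in\openinterval{0}{\pi/2}$. Otherwise this is a two-term relation of value~$0$; it is minimal (a singleton being rational would by equation~(\ref{equ:th1}) force both angles to equal $\tfrac{\pi}{3}$, contradicting $a<c$), so by Conway--Jones it would be proportional to equation~(\ref{equ:th3}), whose value is $\tfrac{1}{2}\neq 0$. This is impossible with nonzero scaling, so $a<b$ strictly, and symmetrically $b<c$.

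Next I would split according to whether the three-term relation is itself minimal. If it is, Conway--Jones says it is proportional to one of the three-term items in the list; the three items of value~$\tfrac{1}{2}$ are excluded by the opening observation, and only equation~(\ref{equ:th2}) remains. Matching the sorted triple $a<b<c$ to $\phi<\tfrac{\pi}{3}-\phi<\tfrac{\pi}{3}+\phi$ (valid exactly when $0<\phi<\tfrac{\pi}{6}$) gives $a=\phi$, $b=\tfrac{\pi}{3}-a$, $c=\tfrac{\pi}{3}+a$, and reading off the coefficients $(-1,1,1)$ up to a nonzero rational scaling yields $B=C=-A$. This produces case~(ii).

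Otherwise some non-empty proper subset has rational value, and I would first reduce to the singleton case: a pair summing to a rational $r$ forces the complementary singleton to equal $-r$, which is either $0$ (contradicting $c\in\openinterval{0}{\pi/2}$) or itself rational. Hence some $\cos$ among $\cos(a),\cos(b),\cos(c)$ is rational, and by equation~(\ref{equ:th1}) the corresponding angle is $\tfrac{\pi}{3}$; by distinctness exactly one angle equals $\tfrac{\pi}{3}$. Splitting off that term leaves a minimal two-term relation of nonzero rational value, which by Conway--Jones is proportional to equation~(\ref{equ:th3}) $\cos(\tfrac{\pi}{5})-\cos(\tfrac{2\pi}{5})=\tfrac{1}{2}$. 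Since $\tfrac{\pi}{5}<\tfrac{\pi}{3}<\tfrac{2\pi}{5}$, the ordering $a<b<c$ forces $a=\tfrac{\pi}{5}$, $b=\tfrac{\pi}{3}$, $c=\tfrac{2\pi}{5}$, and coefficient comparison yields $B=C=-A$, which is case~(i). The only delicate part is the bookkeeping for the proportionality clause in Conway--Jones --- one must simultaneously track the scaling of coefficients and of the rational value, which is exactly what restricts the usable list items to equations~(\ref{equ:th1}), (\ref{equ:th2}), (\ref{equ:th3}) and forces the coefficient relation $B=C=-A$ in both conclusions; the ordering arguments are then routine.
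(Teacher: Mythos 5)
Your proposal is correct and follows essentially the same route as the paper: apply the Conway--Jones theorem, rule out two-term vanishing relations (hence coinciding angles), and then observe that a minimal three-term relation must be proportional to the value-zero identity (\ref{equ:th2}) while a non-minimal one splits into the singleton identity (\ref{equ:th1}) plus the pair identity (\ref{equ:th3}). The paper states this in three sentences; your write-up simply makes explicit the minimality bookkeeping, the value-scaling argument, and the coefficient comparison that the paper leaves implicit.
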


\begin{proof}
  We just need to  apply Theorem \ref{thm:conwayjones}.
  First remark that there is no solution for $A\cos(a) + B\cos(b) = 0$ with $a$ and $b$ distinct and strictly between $0$ and $\tfrac{\pi}{2}$, and $A$ and $B$ non zero.
  Now with $0<a<b<c<\tfrac{\pi}{2}$, we have either a combination of Equations (\ref{equ:th1}) and (\ref{equ:th3}) (first case) or Equation (\ref{equ:th2}) (second case).
\end{proof}

\section{Proof of Theorem \ref{thm:regularity}}
\label{sec:proof}

Here we use Proposition \ref{prop:regularity} and Corollary \ref{cor:tricosine} to prove Theorem \ref{thm:regularity}.

\subsection{For odd $n$}
\label{subsec:odd}
First let us remark that in Theorem \ref{thm:regularity} the first statement when restricted to odd $n$ is a strict subcase of the second statement, so here we will prove the second statement which is as follows:  
for any odd $n\geq 3$ and any tuple of non-zero rational offsets $\gamma = (\gamma_i)_{0\leq i < 0} \in \left(\mathbb{Q}\cap \openinterval{0}{1}\right)^n$ the $n$-fold multigrid $\multigrid{n}{\gamma}$ is regular. 
We reformulate this with Proposition \ref{prop:regularity} as :
for any odd $n\geq 3$, for any $0< p < q <n$ and any three non-zero rational offsets  $\gamma_0, \gamma_p, \gamma_q$, for any $r_0 \in \mathbb{Z}-\gamma_0$, $r_p \in \mathbb{Z} - \gamma_p$ and $r_q \in \mathbb{Z} - \gamma_q$ we have
\[r_0\sin\tfrac{2(p-q)\pi}{n} + r_p\sin\tfrac{2q\pi}{n} - r_q\sin\tfrac{2p\pi}{n} \neq 0.\]
Actually we prove a slitghly reformulated version:
for any odd $n\geq 3$, for any $0< p < q <n$ and any three non-zero rationals $r_0,r_p, r_q\in\mathbb{Q}_{\backslash \{0\}}$ we have \[r_0\sin\tfrac{2(p-q)\pi}{n} + r_p\sin\tfrac{2q\pi}{n} - r_q\sin\tfrac{2p\pi}{n} \neq 0.\]
To apply Corollary \ref{cor:tricosine} we first need to translate the formula with sine and with angles in $\closedopeninterval{0}{2\pi}$ as a formula with cosine and angles in $\openinterval{0}{\tfrac{\pi}{2}}$.

\begin{lemma}[Sine and Cosine]
  \label{lemma:sincos}
  For $\theta \in \closedopeninterval{0}{2\pi}$ we have
  \[ \sin(\theta) = (-1)^{\floor{\frac{\theta}{\pi}}} \cos\left( (-1)^{\floor{\frac{2\theta}{\pi}}}( \floor{\tfrac{\theta}{\pi}} \pi + \tfrac{\pi}{2} - \theta)\right) \]
  and $(-1)^{\floor{\frac{2\theta}{\pi}}}( \floor{\tfrac{\theta}{\pi}} \pi + \tfrac{\pi}{2} - \theta) \in \closedinterval{0}{\tfrac{\pi}{2}}$.

\end{lemma}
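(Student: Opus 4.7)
The plan is to carry out an elementary case analysis on the half-quadrant in which $\theta$ lies. The interval $\closedopeninterval{0}{2\pi}$ decomposes into the four subintervals $\closedopeninterval{0}{\tfrac{\pi}{2}}$, $\closedopeninterval{\tfrac{\pi}{2}}{\pi}$, $\closedopeninterval{\pi}{\tfrac{3\pi}{2}}$, $\closedopeninterval{\tfrac{3\pi}{2}}{2\pi}$, which are precisely the four cases determined by the pair $(\floor{\tfrac{\theta}{\pi}}, \floor{\tfrac{2\theta}{\pi}})\in\{(0,0),(0,1),(1,2),(1,3)\}$. In each case both floor values become explicit constants, so the right-hand side of the formula collapses to a concrete expression.

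For each of the four cases I would first compute the inner quantity $\alpha_\theta := \floor{\tfrac{\theta}{\pi}}\pi + \tfrac{\pi}{2} - \theta$ and observe that the sign of $(-1)^{\floor{2\theta/\pi}}\alpha_\theta$ comes out non-negative and the absolute value is bounded by $\tfrac{\pi}{2}$; this establishes the membership claim $(-1)^{\floor{2\theta/\pi}}\alpha_\theta\in\closedinterval{0}{\tfrac{\pi}{2}}$. Then, using evenness of cosine together with the two standard reduction identities
\[\cos\bigl(\tfrac{\pi}{2}-\theta\bigr) = \sin\theta, \qquad \cos\bigl(\tfrac{3\pi}{2}-\theta\bigr) = -\sin\theta,\]
I would verify that $(-1)^{\floor{\theta/\pi}}\cos\bigl((-1)^{\floor{2\theta/\pi}}\alpha_\theta\bigr)$ equals $\sin\theta$ in each case. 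For instance, on $\closedopeninterval{\pi}{\tfrac{3\pi}{2}}$ the floors take values $1$ and $2$, the argument of the cosine is $\tfrac{3\pi}{2}-\theta\in\openinterval{0}{\tfrac{\pi}{2}}$, and the outer sign is $-1$, so the formula reads $-\cos(\tfrac{3\pi}{2}-\theta) = \sin\theta$, as required.

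The only real obstacle here is bookkeeping: one needs to ensure that the parity of $\floor{\tfrac{\theta}{\pi}}$ (governing the outer sign) and of $\floor{\tfrac{2\theta}{\pi}}$ (governing whether to flip the argument inside the cosine) combine correctly with the shift by $\floor{\tfrac{\theta}{\pi}}\pi$. Because cosine is even and $2\pi$-periodic and there are only four cases, this is mechanical. No further ideas beyond the four direct verifications are needed.
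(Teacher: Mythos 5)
Your proposal is correct and matches the paper's own proof, which likewise splits $\closedopeninterval{0}{2\pi}$ into the four half-quadrants and applies the standard reductions $\sin\theta=\cos(\tfrac{\pi}{2}-\theta)$, $\cos(\theta-\tfrac{\pi}{2})$, $-\cos(\tfrac{3\pi}{2}-\theta)$, $-\cos(\theta-\tfrac{3\pi}{2})$ case by case. Only a cosmetic remark: on $\closedopeninterval{\pi}{\tfrac{3\pi}{2}}$ the cosine argument $\tfrac{3\pi}{2}-\theta$ equals $\tfrac{\pi}{2}$ at $\theta=\pi$, so it lies in the closed interval $\closedinterval{0}{\tfrac{\pi}{2}}$ rather than the open one, which is exactly what the lemma asserts.
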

\begin{proof}
  This result is just a rewriting of :
  \begin{itemize}
  \item if $0\leq \theta < \tfrac{\pi}{2}$ then $\sin(\theta) = \cos(\tfrac{\pi}{2} - \theta)$ and $(\tfrac{\pi}{2} - \theta) \in \closedinterval{0}{\tfrac{\pi}{2}}$
  \item if $\tfrac{\pi}{2} \leq \theta < \pi$ then $\sin(\theta) = \cos(\theta - \tfrac{\pi}{2})$ and $(\theta - \tfrac{\pi}{2}) \in \closedinterval{0}{\tfrac{\pi}{2}}$
  \item if $\pi \leq \theta < \tfrac{3\pi}{2}$ then $\sin(\theta) = -\cos(\tfrac{3\pi}{2}-\theta)$ and $(\tfrac{3\pi}{2}-\theta) \in \closedinterval{0}{\tfrac{\pi}{2}}$
  \item if $\tfrac{3\pi}{2} \leq \theta < 2\pi$ then $\sin(\theta) = -\cos(\theta - \tfrac{3\pi}{2})$ and $(\theta - \tfrac{3\pi}{2}) \in \closedinterval{0}{\tfrac{\pi}{2}}$
  \end{itemize}
\end{proof}
We define $\epsilon(\theta) := (-1)^{\floor{\frac{\theta}{\pi}}}$ and $\phi(\theta) := (-1)^{\floor{\frac{2\theta}{\pi}}}( \floor{\tfrac{\theta}{\pi}} \pi + \tfrac{\pi}{2} - \theta)$.
Remark that this means that $\theta = \floor{\tfrac{\theta}{\pi}}\pi + \tfrac{\pi}{2} -(-1)^{\floor{\frac{2\theta}{\pi}}}\phi(\theta)$.

Let $n,p,q$ be integers such that $n$ is odd,  $n\geq 3$ and $0< q < p < n$.
By contradiction suppose that there exists $r_0$,$r_p$ and $r_q$ non-zero rationals such that
\[ r_0\sin(\tfrac{2(p-q)\pi}{n}) + r_p\sin(\tfrac{2q\pi}{n}) - r_q\sin(\tfrac{2p\pi}{n}) = 0. \]
By Lemma \ref{lemma:sincos} we have
\[ r_0\epsilon(\tfrac{2(p-q)\pi}{n})\cos(\phi(\tfrac{2(p-q)\pi}{n})) + r_p\epsilon(\tfrac{2q\pi}{n})\cos(\phi(\tfrac{2q\pi}{n})) - r_q\epsilon(\tfrac{2p\pi}{n})\cos(\phi(\tfrac{2p\pi}{n})) = 0. \]
Which we reformulate as
\[ r_0' \cos(\theta_0) + r_p'\cos(\theta_q) + r_q'\cos(\theta_p) = 0\]
with $r_0' := r_0\epsilon(\tfrac{2(p-q)\pi}{n})$, $r_p' := r_p\epsilon(\tfrac{2q\pi}{n})$, $r_q' := - r_q\epsilon(\tfrac{2p\pi}{n})$ and $\theta_0 := \phi(\tfrac{2(p-q)\pi}{n})$, $\theta_q := \phi(\tfrac{2q\pi}{n})$, $\theta_p := \phi(\tfrac{2p\pi}{n})$.

Remark that for odd $n$ and any $0<k<n$ we have $\tfrac{2k\pi}{n}\notin\{0,\tfrac{\pi}{2},\pi, \tfrac{3\pi}{2}\}$, so $\phi(\tfrac{2k\pi}{n})\notin\{0,\tfrac{\pi}{2}\}$. This implies that $0< \theta_0, \theta_p, \theta_q <\tfrac{\pi}{2}$.

Moreover for odd $n$ we have that $\theta_0,\theta_p,\theta_q$ are not all equal. By contradction if $\theta_0 = \theta_p = \theta_p$ we have that $\tfrac{2p\pi}{n},\tfrac{2q\pi}{n}, \tfrac{2(p-q)\pi}{n} \in \phi^{-1}(\{\theta_0\}) = \{ \tfrac{\pi}{2} - \theta_0, \tfrac{\pi}{2} + \theta_0, \tfrac{3\pi}{2} - \theta_0, \tfrac{3\pi}{2} + \theta_0\}$ which is impossible.
So we have \[ r_0' \cos(\theta_0) + r_p'\cos(\theta_q) + r_q'\cos(\theta_p) = 0\] with non-zero rationals $r_0', r_p', r_q'$ and with three angles strictly between $0$ and $\tfrac{\pi}{2}$ and not all equal.

So we can apply Corollary \ref{cor:tricosine} and we now have two cases:
\begin{enumerate}
\item $\{\theta_0, \theta_p, \theta_q\} = \{ \tfrac{\pi}{5}, \tfrac{\pi}{3}, \tfrac{2\pi}{5} \}$
\item $\{\theta_0, \theta_p, \theta_q\} = \{ \theta, \tfrac{\pi}{3}- \theta, \tfrac{\pi}{3}+\theta\}$ for some $0< \theta < \tfrac{\pi}{6}$
\end{enumerate}
Let us now show that both cases lead to a contradiction.
In the first case we have that $\{\tfrac{2p\pi}{n},\tfrac{2q\pi}{n},\tfrac{2(p-q)\pi}{n}\} = \{ \theta_1, \theta_2, \theta_3\}$ with
\begin{align*}
  \theta_1 &\in \phi^{-1}(\{\tfrac{\pi}{5}\}) = \{\tfrac{3\pi}{10}, \tfrac{7\pi}{10}, \tfrac{13\pi}{10}, \tfrac{17\pi}{10}\} \\
  \theta_2 &\in \phi^{-1}(\{\tfrac{\pi}{3}\}) = \{\tfrac{\pi}{6}, \tfrac{5\pi}{6}, \tfrac{7\pi}{6}, \tfrac{11\pi}{6}\} \\
  \theta_3 &\in \phi^{-1}(\{\tfrac{2\pi}{5}\}) = \{\tfrac{\pi}{10}, \tfrac{9\pi}{10}, \tfrac{11\pi}{10}, \tfrac{19\pi}{10}\}
\end{align*}
This is impossible because by definition $\tfrac{2p\pi}{n} = \tfrac{2q\pi}{n} + \tfrac{2(p-q)\pi}{n}$ and we have no $\theta_1,\theta_2,\theta_3$ as defined above such that one is the sum of the two other.

In the second case we have that $\{\theta_0, \theta_p, \theta_q\} = \{\theta, \tfrac{\pi}{3}-\theta, \tfrac{\pi}{3}-\theta\}$ for some $0< \theta < \tfrac{\pi}{6}$.
Now we use $\tfrac{2p\pi}{n} = \tfrac{2q\pi}{n} + \tfrac{2(p-q)\pi}{n}$ and by definition we have
\begin{align*}
  \tfrac{2p\pi}{n} &= \floor{\tfrac{2p}{n}}\pi + \tfrac{\pi}{2} -(-1)^{\floor{\frac{4p}{n}}} \theta_p = \floor{\tfrac{2p}{n}}\pi + \tfrac{\pi}{2} \pm \theta_p\\
  \tfrac{2q\pi}{n} &= \floor{\tfrac{2q}{n}}\pi + \tfrac{\pi}{2} -(-1)^{\floor{\frac{4q}{n}}} \theta_q = \floor{\tfrac{2q}{n}}\pi + \tfrac{\pi}{2} \pm \theta_q\\
  \tfrac{2(p-q)\pi}{n} &= \floor{\tfrac{2(p-q)}{n}}\pi + \tfrac{\pi}{2} -(-1)^{\floor{\frac{4(p-q)}{n}}} \theta_0 = \floor{\tfrac{2(p-q)}{n}}\pi + \tfrac{\pi}{2} \pm \theta_0
\end{align*}
By assembling these two we get
\[ (\floor{\tfrac{2p}{n}} - \floor{\tfrac{2q}{n}}  - \floor{\tfrac{2(p-q)}{n}} -1)\pi + \tfrac{\pi}{2} = \pm \theta_p \pm \theta_0 \pm\theta_q\]
And with $\{\theta_0, \theta_p, \theta_q\} = \{\theta, \tfrac{\pi}{3}-\theta, \tfrac{\pi}{3}-\theta\}$ we get 
\[ (\pm \theta_p \pm \theta_0 \pm\theta_q) \in \{ \pm 3 \theta, \pm \theta, \pm\tfrac{2\pi}{3} \pm \theta\}\]
However this is impossible since for $0<\theta < \tfrac{\pi}{6}$, we have
\[\left(\mathbb{Z}\pi + \tfrac{\pi}{2}\right)\cap \{ \pm 3 \theta, \pm \theta, \pm\tfrac{2\pi}{3} \pm \theta\} = \emptyset\]

By contradiction we proved that for odd $n$, any $n$-fold multigrid with non-zero rational offsets is regular.

\subsection{For even $n$}

Let us now prove the first statement of Theorem \ref{thm:regularity} for even $n$ which is:
for any even $n\geq 4$ and any non-zero rational offset $r\in\mathbb{Q}\cap \openinterval{0}{1}$ the $n$-fold multigrid $\multigrid{n}{r}$ is regular. We reformulate it using Proposition \ref{prop:regularity} as:
for any even $n\geq 4$ and any non-zero rational offset $r\in\mathbb{Q}\cap \openinterval{0}{1}$, for any $0<q<p<n$ and any $r_0 \in \mathbb{Z}-r$, $r_p\in\mathbb{Z}-r$ and $r_q \in \mathbb{Z}-r$ we have \[r_0\sin\tfrac{(p-q)\pi}{n} + r_p\sin\tfrac{q\pi}{n} - r_q\sin\tfrac{p\pi}{n} \neq 0 .\]
We will prove this by contradiction.
Let $n\geq 4$ be an even integer and $r$ be a non-zero rational offset.
Suppose that there exists $p,q,r_0,r_p,r_q$ with $p,q$ integers, $0<q<p<n$ and  $r_0 \in \mathbb{Z}-r$, $r_p\in\mathbb{Z}-r$ and $r_q \in \mathbb{Z}-r$, such that \[r_0\sin\tfrac{(p-q)\pi}{n} + r_p\sin\tfrac{q\pi}{n} - r_q\sin\tfrac{p\pi}{n} = 0 .\]
We apply Lemma \ref{lemma:sincos} with the fact that since $0< \tfrac{p\pi}{n}, \tfrac{q\pi}{n}, \tfrac{(p-q)\pi}{n}<\pi$ we have $\epsilon(\tfrac{k\pi}{n}) = 1$ and $\phi(\tfrac{k\pi}{n}) = (-1)^{\floor{\frac{2k}{n}}}( \tfrac{\pi}{2} - \theta)$ for $k \in\{p,q,(p-q)\}$.
We obtain
\[ r_0\cos\theta_0 + r_p\cos\theta_q - r_q\cos\theta_p = 0\]
with $\theta_0 = \phi(\tfrac{(p-q)\pi}{n})$, $\theta_p = \phi(\tfrac{p\pi}{n})$ and $\theta_q = \phi( \tfrac{q\pi}{n})$.
And since $0< \tfrac{p\pi}{n}, \tfrac{q\pi}{n}, \tfrac{(p-q)\pi}{n}<\pi$ we have $\theta_0,\theta_p,\theta_q \in \closedopeninterval{0}{\tfrac{\pi}{2}}$. In particular since $n$ is even we can have $\tfrac{p\pi}{n} = \tfrac{\pi}{2}$ which means that we can have $\theta_p = 0$ (and also for $\theta_0$ or $\theta_q$).
Note also that with $n$ even (contrary to the odd case) we can have $\theta_0 = \theta_p = \theta_q$, for example with $n=6$, $q=2$ and $p=4$ we have $\theta_0 = \theta_p = \theta_q = \tfrac{\pi}{6}$.
Which means that now we have a disjunction of four cases:
\begin{enumerate}
\item $\theta_0 = \theta_p = \theta_q$
\item $0< \theta_0, \theta_p, \theta_q < \tfrac{\pi}{2}$ and not all equal
\item two of the angles are 0 and the other one is not
\item one of the angles is 0 and the other two are not
\end{enumerate}

The first case reduces to $(r_0 + r_p -r_q)\cos\theta_0 = 0$ and with $\theta_0\in\closedopeninterval{0}{\tfrac{\pi}{2}}$ we have $\cos\theta_0 \neq 0$ so $(r_0 + r_p -r_q) = 0$ but this is impossible because $(r_0 + r_p -r_q) \in \mathbb{Z} - r$ and $0\notin (\mathbb{Z}-r)$ for $r\in(\mathbb{Q}\cap \openinterval{0}{1})$.

The second case is the same as the one discussed in Subsection \ref{subsec:odd} above, the main thing we used in the proof for odd $n$ is the fact that $\tfrac{2p\pi}{n} = \tfrac{2q\pi}{n} + \tfrac{2(p-q)\pi}{n}$ but we have the same for even $n$ with  $\tfrac{p\pi}{n} = \tfrac{q\pi}{n} + \tfrac{(p-q)\pi}{n}$. So the proof holds and this case is impossible.

The third case is impossible because for two angles to be 0, we need two of $\{\tfrac{p\pi}{n}, \tfrac{q\pi}{n}, \tfrac{(p-q)\pi}{n}\}$ to be $\tfrac{\pi}{2}$ but this is impossible with $0<q<p<n$.

The fourth case reduces to $A\cos a + B\cos b = C$ with $A,B,C \in \mathbb{Z}-r\subset \mathbb{Q}$ so we can apply Theorem \ref{thm:conwayjones} and we get that either $a=b=\tfrac{\pi}{3}$ or $a=\tfrac{\pi}{5}$ and $b=\tfrac{\pi}{5}$. The first subcase is impossible because $\{\theta_0,\theta_p,\theta_q\} = \{0,\tfrac{\pi}{3}\}$ implies $\{\tfrac{p\pi}{n},\tfrac{q\pi}{n}, \tfrac{(p-q)\pi}{n}\} \subseteq \{ \tfrac{\pi}{6},\tfrac{\pi}{2}, \tfrac{5\pi}{6}\}$ and this is incompatible with $\tfrac{p\pi}{n} = \tfrac{q\pi}{n} + \tfrac{(p-q)\pi}{n}$. The second subcase is also impossible for the same reason as (up to interchanging $\theta_0$, $\theta_p$ and $\theta_q$) we would have $\theta_0 = 0$, $\theta_p=\tfrac{\pi}{5}$ and $\theta_q = \tfrac{2\pi}{5}$ which means that $\tfrac{(p-q)\pi}{n} = \tfrac{\pi}{2}$, $\tfrac{p\pi}{n} \in \{ \tfrac{3\pi}{10}, \tfrac{7\pi}{10}\}$ and $\tfrac{q\pi}{n}\in \{ \tfrac{\pi}{10},\tfrac{9\pi}{10}\}$ and this is incompatible with $\tfrac{p\pi}{n} = \tfrac{q\pi}{n} + \tfrac{(p-q)\pi}{n}$.

Note that only in the first case we use the fact that the offset are all the same $r\in\mathbb{Q}\cap\openinterval{0}{1}$, the three other case work for any non-zero rational offsets as for the odd case. And actually we could refine the condition because what is important here is that $0\notin (\mathbb{Z}-\gamma_0-\gamma_q + \gamma_p)$ with $\gamma_0$, $\gamma_p$ and $\gamma_q$ the rational offsets. So for even $n$ if we have a tuple of rational offsets $\gamma = (\gamma_i)_{0\leq i < n}$ such that for any distinct $i$, $j$, $k$ we have $\gamma_i - \gamma_j - \gamma_k\neq 0$ then the multigrid $\multigrid{n}{\gamma}$ is regular.

\begin{figure}[t]
  \includegraphics[width=\textwidth]{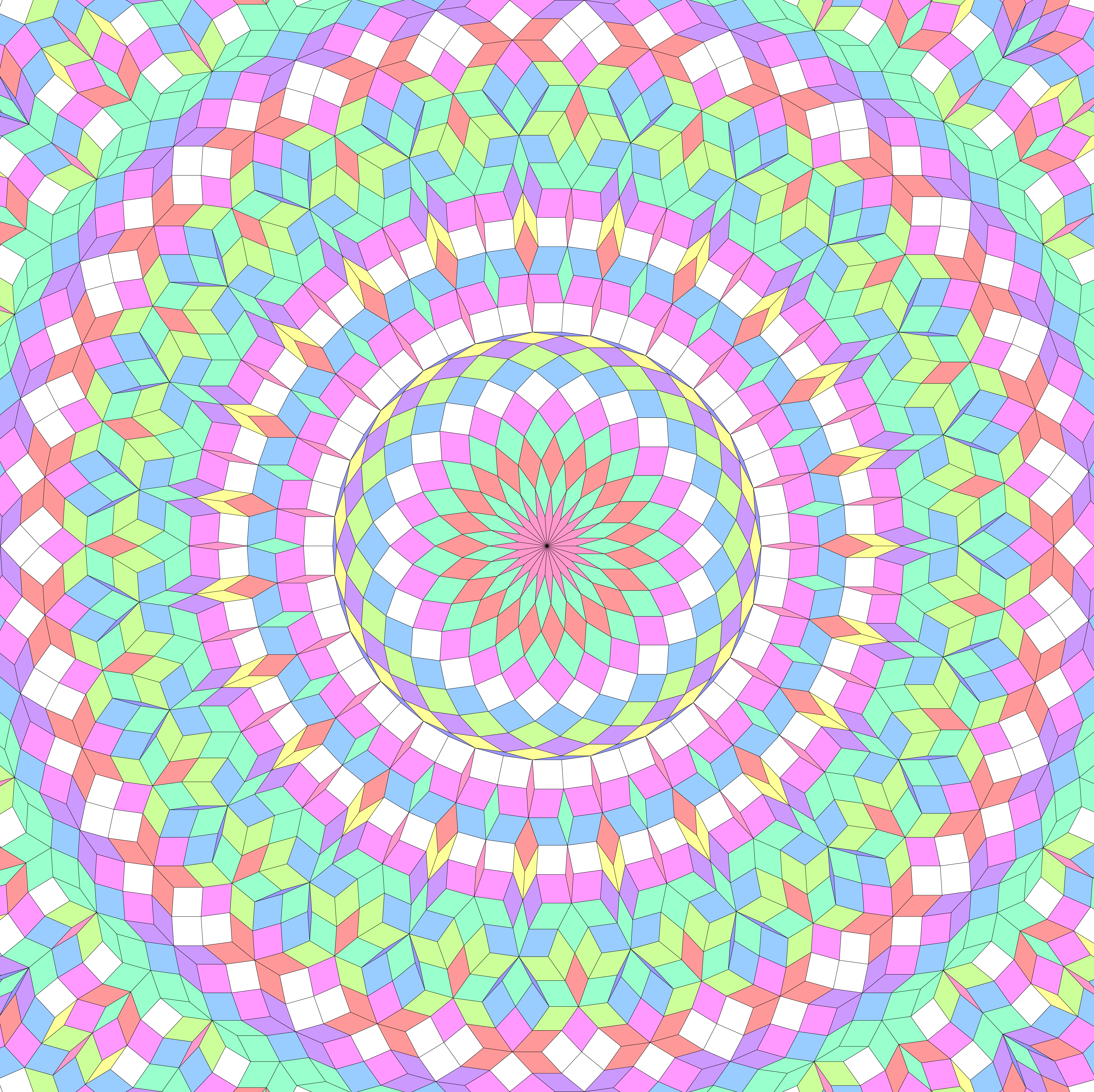}
  \caption{ A central patch of the multigrid dual tiling $\dualtiling{23}{\tfrac{1}{23}}$ with 23-fold rotational symmetry}
  \label{fig:P23}
\end{figure}

\bibliographystyle{alpha}
\bibliography{multigrid_lutfalla}

\end{document}